\title{Recognizing Sumsets is NP-Complete}
\author{Amir Abboud\thanks{Weizmann Institute of Science and INSAIT, Sofia University ``St. Kliment Ohridski''. This work is part of the project CONJEXITY that has received funding from the European Research Council (ERC) under the European Union's Horizon Europe research and innovation programme (grant agreement No.~101078482). Supported by an Alon scholarship and a research grant from the Center for New Scientists at the Weizmann Institute of Science.} \and Nick Fischer\thanks{INSAIT, Sofia University ``St. Kliment Ohridski''. Partially funded by the Ministry of Education and Science of Bulgaria's support for INSAIT, Sofia University ``St. Kliment Ohridski" as part of the Bulgarian National Roadmap for Research Infrastructure.
Parts of this work were done while the author was at Weizmann Institute of Science.} \and Ron Safier\thanks{Weizmann Institute of Science.} \and Nathan Wallheimer\thanks{Weizmann Institute of Science.}}
\date{}
\begin{document}

\maketitle
\begin{abstract}
\noindent
Sumsets are central objects in additive combinatorics. In 2007, Granville asked whether one can efficiently recognize whether a given set $S$ is a sumset, i.e. whether there is a set $A$ such that $A+A=S$. Granville suggested an algorithm that takes exponential time in the size of the given set, but can we do polynomial or even linear time? This basic computational question is indirectly asking a fundamental structural question: do the special characteristics of sumsets allow them to be efficiently recognizable? In this paper, we answer this question negatively by proving that the problem is NP-complete. Specifically, our results hold for integer sets and over any finite field. Assuming the Exponential Time Hypothesis, our lower bound becomes~\smash{$2^{\Omega(n^{1/4})}$}.
\end{abstract}

\section{Introduction} \label{sec:introduction}
Additive combinatorics has been rising steadily towards becoming one of the most important branches of mathematics for (theoretical) computer scientists. Its insights have catalyzed major advances in diverse topics such as communication complexity~\cite{BenSassonLR14,KelleyLM24}, randomness extractors~\cite{BarakIW06}, coding theory~\cite{BhowmickL15}, property testing~\cite{Alon02,Samorodnitsky07}, graph sparsification~\cite{Hesse03,AbboudB17}, worst-case to average-case reductions~\cite{AsadiGGS22}, fine-grained hardness of approximation~\cite{JinX23,AbboudBF23}, and also in algorithm design for Subset-Sum~\cite{ChaimovichFG89,BringmannN20,BringmannW21}, Knapsack~\cite{ChenLMZ24,Bringmann24,Jin24}, $3$-SUM~\cite{ChanL15}, and sparse convolutions~\cite{BringmannN21}. The surveys~\cite{Trevisan09,Viola11,Bibak13,Lovett17} include many other references and applications.

The \emph{sumset} of a set $A$ is the set of all pairwise sums $A+A:= \set{a+b : a, b \in A}$, and a set~$S$ is called a sumset if there exists a set $A$ such that $A+A=S$. Sumsets are a central object in additive combinatorics; a field that is often defined (e.g.\ in Wikipedia) ``by example'' while referring to fundamental results relating the properties of sets to those of their sumsets.

In this work, we are interested in computational problems in additive combinatorics. Perhaps the first question one might ask is whether a given set is a sumset.

\begin{definition}[The Sumset Recognition Problem]
\label{def:sumset_recognition}
Given a set $S \subseteq \mathbb{G}$ over a group $\mathbb{G}$ of size $n$ (e.g. the integers $\{1,\ldots,n\}$ or $\Field_p^d$ where $p^d=n$) decide whether there exists a set $A \subseteq \mathbb{G}$ such that $S=A+A$.
\end{definition}

In 2007, Croot and Lev~\cite{CrootL07} published a paper enumerating dozens of open questions by experts in the additive combinatorics community. The only computational question on the list, suggested by Andrew Granville, concerns the Sumset Recognition problem (Problem 4.11, ``Recognizing sumsets algorithmically'').

\begin{goq}
Is there an efficient algorithm for the Sumset Recognition problem?
\end{goq}

A naive algorithm solves the problem in exponential $2^n \cdot n^2$ time by enumerating all $2^n$ subsets~\makebox{$A \subseteq \mathbb{G}$} of the group, computing their sumset $A+A$ in quadratic time, and checking whether it equals the given set $S$. When raising the question, Granville observed that one can limit the search space to subsets of the given set, which leads to a $2^{|S|}$ upper bound. But does there exist a much faster strategy? Is there a polynomial time, or even a linear time algorithm?

While being a purely computational question, Granville is asking a fundamental question about the structure of sumsets, as pointed out in a joint paper~\cite{AlonGU10} with Alon and Ubis: ``\emph{Perhaps (though this seems unlikely) any sumset contains enough structure that is quickly identifiable? Perhaps most non-sumsets are easily identifiable in that they lack certain structure?}'' 

Despite the piquing interest in additive combinatorics among computer scientists, including the breakthrough by computer scientists~\cite{KelleyM23} on a central (non-computational) question of the field,\footnote{Notably, the result by Kelley and Meka~\cite{KelleyM23} on $3$-term arithmetic-progression-free sets resolves the very first problem in the aforementioned list~\cite{CrootL07}.} Granville's question has remained open and was recently raised again in a MathOverflow post~\cite{mathoverflow20}.

One intuitive explanation for the difficulty in settling the complexity of Sumset Recognition is that it is a \emph{factoring}-type of problem. In \emph{the} Factoring problem, which is famously elusive of the P vs.\ NP-Hard classification, we are given an integer $x = p\cdot q$, where $p,q$ are unknown primes, and are asked to find $p,q$. Replacing the multiplication operation over integers with the sumset operation (or \emph{Minkowski sum}) over sets, we get a variant of our problem.\footnote{We get the variant where given $S=A+B$ we are asked to return $A',B' \neq \set{0}$ such that $A'+B'=S$. For conciseness, we focus this paper on the basic $A+A$ version of Granville's question (and on recognition rather than factoring). Intuitively, the other variants are only harder.} How does changing the operation affect the complexity? 

\subsection{Results}
This paper gives a negative answer to Granville's question, proving that the Sumset Recognition problem is NP-hard. This gives formal support for the intuition of Alon, Granville, and Ubis that sumsets are not structured enough to allow for fast identification.

\begin{theorem} \label{thm:np-hard-integer}
The Sumset Recognition problem over $\Int$ is NP-complete.
\end{theorem}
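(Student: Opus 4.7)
Membership in \textsf{NP} is immediate: given a candidate $A$, we verify $A+A=S$ in $O(|A|^2+|S|\log|S|)$ time, so the witness is $A$ itself. The substance of the theorem is the hardness direction, and the plan is to exhibit a polynomial-time reduction from a standard \textsf{NP}-complete problem --- naturally 3-SAT, Exact Cover by 3-Sets, or 3-Coloring. The target ETH lower bound of $2^{\Omega(n^{1/4})}$ dictates that the reduction should blow up the instance size polynomially; a quartic blowup is exactly what one would tolerate if the source problem has ETH lower bound $2^{\Omega(k)}$ on instances of size~$k$.

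To move the source instance into the additive world, I would encode each source object (variable, clause, edge, \ldots) using integers of the shape $\sum_i c_i B^i$ for a sufficiently large base $B$. Choosing $B$ larger than the maximum possible sum of digits ensures that additions never produce carries between blocks, so the relation $A+A=S$ decomposes coordinate-wise and behaves essentially like a sumset equation in $\mathbb{Z}^d$. Within this framework I would design a ``skeleton'' set of integers $T$ that must lie in $A$ for purely arithmetic reasons --- its extreme elements are pinned by $\min(S)/2$ and $\max(S)/2$, its internal elements by a near-Sidon arrangement whose doubles $\{2t : t \in T\}$ appear in $S$ and cannot be obtained as any other pair-sum. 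On top of $T$ I would attach gadget elements $G_{\text{var}}$, $G_{\text{cls}}$, etc., arranged so that the only remaining freedom in choosing $A$ coincides with the choices of the source instance (e.g.\ the truth value of each variable selects one of two mutually exclusive integers into $A$), and so that every element of $S$ is covered as a pair-sum exactly when those choices realize a valid solution.

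The main obstacle, and the place where I expect most of the technical work to lie, is establishing \emph{soundness}: showing that any set $A$ satisfying $A+A=S$ must conform to the intended skeleton-plus-gadget structure, with no ``parasitic'' factorizations. Sumset equations are notoriously flexible --- inverse theorems show that many sets admit a rich variety of decompositions --- so the rigidity argument must rely on the careful arithmetic design of $S$. I would proceed by peeling: the minimum of $A$ is forced to be $\min(S)/2$, and the second-smallest element is forced to be $\min(S \setminus \{\min S\})-\min(A)$; iterating this logic along an arithmetic scaffold should determine the skeleton $T$, after which a local case analysis on each gadget block shows that each gadget contributes a binary (or bounded-arity) choice corresponding to a variable/covering decision in the source instance. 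Completeness is then routine: given a valid assignment one writes down $A$ directly.

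Once correctness is established, the ETH lower bound is a bookkeeping step: if the reduction transforms a 3-SAT instance on $k$ variables (which has ETH hardness $2^{\Omega(k)}$ after sparsification) into a Sumset Recognition instance of size $n = O(k^{4})$, then a $2^{o(n^{1/4})}$-time algorithm for Sumset Recognition would yield a $2^{o(k)}$-time algorithm for 3-SAT, contradicting ETH. The integers produced by the construction have magnitude polynomial in $n$ (since only polynomially many base-$B$ digits are used and $B$ itself is polynomial), so the reduction runs in truly polynomial time and produces instances over $\mathbb{Z}$ as required by Theorem~\ref{thm:np-hard-integer}.
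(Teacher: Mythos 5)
Your high-level plan is sensible and overlaps with the paper's at the coarsest level: NP membership is immediate, the reduction is from 3-SAT, one encodes source objects at well-separated positions (your base-$B$ encoding plays the role of the paper's length-$n$ intervals), one uses near-Sidon sets to kill spurious pair-sums, and one uses a rigid ``skeleton'' to pin down the gross shape of any feasible root $A$. The peeling observation at the start of your soundness argument is also correct and is used in the paper (e.g.\ $\min(A)=\min(S)/2$). But there is a genuine gap at the center of your sketch, and it is exactly where the paper's real work lies.

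The gap is the \emph{cross-term problem}. You propose to attach per-variable gadget blocks so that ``the only remaining freedom in choosing $A$ coincides with the choices of the source instance, and \ldots\ every element of $S$ is covered as a pair-sum exactly when those choices realize a valid solution.'' Concretely, if $A$ contains a chosen block $V_{\alpha_i}+s_i$ for each variable $i$, then $A+A$ also contains all pairwise sums $V_{\alpha_i}+V_{\alpha_j}+s_i+s_j$ for $i\neq j$, and these depend on the assignment $\alpha$. You must place \emph{some} fixed set at position $s_i+s_j$ in $S$, but you do not know $\alpha$ when building $S$; and no reasonable variable gadget has the property that $V_0+V_0 = V_0+V_1 = V_1+V_1$. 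So ``exactly when those choices realize a valid solution'' is unachievable by direct design of $S$: the non-gadget pair-sums vary with the choices. This is the obstacle the paper calls masking (Lemma~\ref{lem:masking-int}): one relaxes the target to $X\subseteq A+A\subseteq Y$, covers the assignment-dependent cross terms with ``don't-care'' positions $Y\setminus X$, and then proves a separate reduction converting $X\subseteq A+A\subseteq Y$ back to an exact equation $X'=A'+A'$. That reduction, in turn, must prevent the masking elements themselves from generating new uncontrolled cross terms; the paper shows the game terminates after two rounds by using singletons and then intervals. None of this is visible in your plan.

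The second, related gap is soundness for the gadget choices. Your peeling argument determines the \emph{forced} elements of $A$, but the whole point is that some positions in $A$ are a genuine binary choice; peeling cannot and should not pin those down, so it cannot by itself rule out ``parasitic'' roots. The paper handles this with an explicit \emph{positioning} reduction (Lemma~\ref{lem:pos-int}): it builds a primitive skeleton set $I$ (unique sumset representation) and embeds $X$, $U$, $\{0\}$, and an interval at carefully chosen slots of $I+I$ so that \emph{any} root $A'$ is forced to be supported on the $I$-slots and, crucially, so that the condition $A\subseteq U$ is enforced via a slot where $A'_a + A'_{s-a} \subseteq U + \{0,n\}$. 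This is qualitatively different from peeling: it imports the superset constraint $A\subseteq U$ into the plain Sumset Recognition problem, which is what lets the variable gadget carry a free binary choice without breaking soundness. Without something equivalent, your ``local case analysis on each gadget block'' has no mechanism to exclude roots that put unexpected elements into gadget slots. In short, the framing is right but the two lemmas that make it work --- positioning and masking --- are missing, and the cross-term obstruction that necessitates masking is not even identified.
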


While the proof can be viewed as yet another ``gadget reduction'', its high-level structure and implementation level gadgets are quite novel and specific to the special structure of sumsets. In particular, we rely crucially on results from additive combinatorics such as constructions of \emph{Sidon sets}; while such tools have become commonplace in theoretical computer science (e.g. in the aforementioned areas) we are not aware of any previous NP-completeness reduction that uses them.\footnote{The related \emph{Behrend sets} have been used in some fine-grained hardness reductions from the 3-SUM problem~\cite{DudekGS20} and in a barrier for compressing SAT instances~\cite{DellM14} but these settings are very different.} We are hopeful that our reduction framework will spur more results on the complexity of additive combinatorics questions.

The proof is via a reduction from $3$-SAT on $N$ variables and clauses to Sumset Recognition on a set of $|S|=O(N^4)$ integers in an interval of size $|\mathbb{G}|=O(N^4)$. This not only proves the NP-hardness in terms of the parameter $n=|S|$ but also in terms of $n=|\mathbb{G}|$ as in Definition~\ref{def:sumset_recognition}, which is a stronger statement.

The next theorem shows that the reduction can be implemented with additions in any finite field; which are the most standard abelian groups in additive combinatorics.

\begin{theorem} \label{thm:np-hard-finite-field}
For any prime $p$, the Sumset Recognition problem over $\Field_p^d$ is NP-complete.
\end{theorem}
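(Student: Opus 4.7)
The plan is to derive Theorem~\ref{thm:np-hard-finite-field} from Theorem~\ref{thm:np-hard-integer} by embedding the integer Sumset Recognition instance into $\Field_p^d$ through an arithmetic-preserving coordinate representation; NP-membership is immediate by guessing~$A$.

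For odd $p$, given an integer instance $S\subseteq\{0,\ldots,M-1\}$ with $M=O(N^4)$ produced by the proof of Theorem~\ref{thm:np-hard-integer}, I would set $b := \lfloor p/2\rfloor + 1$ and define $\varphi:[0,2M)\to\Field_p^d$ to be the base-$b$ digit map with $d = O(\log_p M)$ coordinates. Since every digit lies in $\{0,\ldots,b-1\}$ and $2(b-1)=p-1$ for odd $p$, coordinate-wise addition in $\Field_p^d$ never wraps around mod~$p$, so $\varphi$ is injective and satisfies $\varphi(x)+\varphi(y)=\varphi(x+y)$ in $\Field_p^d$ whenever $x+y<2M$. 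The forward direction then follows immediately: any integer $A$ with $A+A=S$ yields $\varphi(A)+\varphi(A)=\varphi(S)$ over $\Field_p^d$.

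The converse direction is where I expect the main work to go: I must rule out spurious ``large-digit'' solutions $A'\subseteq\Field_p^d$ with $A'+A'=\varphi(S)$ that do not correspond to any integer set, since the $i$-th-coordinate multisets of $A'$ are only loosely constrained by those of $\varphi(S)$. My plan is to augment the instance with a small set of anchor elements that pin down the digits of every element of $A'$; concretely, I would reserve an extra block of coordinates carrying, for each element of $S$, a unique Sidon-type ``tag'' (chosen so that any tag-sum uniquely identifies the pair), which forces any valid $A'$ to decode unambiguously inside the low-digit cube $\{0,\ldots,2b-2\}^d$. Once $A'\subseteq\varphi([0,M))$ is secured, $A := \varphi^{-1}(A')$ is an integer solution to the original instance.

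The hardest case, and the true obstacle, is $p=2$, because the base-$p$ scheme collapses: $a+a=0$ forces $0\in A'+A'$, the symmetry $A'=-A'$ is automatic, and there is no slack for avoiding wrap-around in a single coordinate. Here my plan is to rerun the 3-SAT reduction of Theorem~\ref{thm:np-hard-integer} natively in $\Field_2^d$, replacing the integer proof's ``separation by magnitude'' with ``separation by disjoint coordinate blocks'' and re-implementing its Sidon-set gadgets via known constructions of $B_2$ sets in $\Field_2^d$ arising from coding theory (e.g.\ BCH codes). The forced symmetry and degenerate self-sum weaken the gadget constraints relative to the integer setting, so verifying that each gadget still enforces its intended Boolean semantics when every element is its own negative is where the technical heart of the proof lies.
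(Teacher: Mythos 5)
Your proposed reduction from the integer case breaks at the very first step: the map $\varphi$ you describe is not additive. In base~$b$, the $i$-th digit of $x+y$ is the $i$-th digits of $x$ and $y$ plus a carry from position $i-1$, all reduced modulo~$b$; this is not the same as the $i$-th coordinate of $\varphi(x)+\varphi(y)$, which is simply their integer sum (no carry, no reduction). A concrete counterexample: with $x=1$ and $y$ having first digit $b-1$, the first coordinate of $\varphi(x)+\varphi(y)$ is $b$, whereas the first digit of $x+y$ is $0$. So $\varphi(x)+\varphi(y)\neq\varphi(x+y)$ in general, and even your easy (forward) direction fails. There is also a structural obstruction lurking behind this: $\Int/N\Int$ is cyclic while $\Field_p^d$ is elementary abelian, so no group-homomorphic embedding of an integer interval into $\Field_p^d$ can exist, and the ``anchor/tag'' patch you sketch cannot recover a property that has no chance of holding in the first place.

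The paper takes a different route: it does \emph{not} reduce from the integer theorem at all, but rather reruns the entire chain of reductions (3-SAT $\to$ relaxed sumset $\to$ masking $\to$ positioning) natively over $\Field_p^d$ for \emph{every} prime $p$, not just $p=2$. The key technical adaptation is that because over $\Field_2^d$ sumset roots are never unique (if $X=A+A$ then also $X=(A+a)+(A+a)$), the positioning and masking lemmas for finite fields are formulated in a weaker two-set form, asking for $A\subseteq U$, $B\subseteq V$ with $X=A+B$ (resp.\ $X\subseteq A+B\subseteq Y$); the final positioning step (\cref{lem:pos-Fp}) then converts this asymmetric form back to a genuine $X''=A''+A''$ instance. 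Making the skeleton set primitive over $\Field_p^d$ requires its own tools (orthogonal auxiliary vectors together with the Cauchy--Davenport theorem; see \cref{lem:make-irreducible,lem:make-primitive}), and separate corner cases for $p=2$ and $p=3$. Your instinct that $p=2$ needs a native construction with coordinate-block separation and field Sidon sets is on the right track, but the paper applies this strategy uniformly for all~$p$, and the bulk of the technical work lies precisely in the primitivity/irreducibility arguments and the $A+B$ reformulation that your proposal does not address.
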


The proof uses a similar high-level approach to the one for integers, but the different properties of fields compared to integers (e.g. that there might be elements $x$ with $2x = 0$) cause significant new challenges at the implementation level.

\subsection{Related Work}
The question of covering a set $S$ by a sumset $A+A$ with an $A$ that has minimal size has received considerable attention in the literature, both mathematically where the minimal size to cover the integers $\{1,\ldots,n\}$ is not known~\cite{Moser60,BulteauFRV15}, and algorithmically where parameterized and approximation algorithms have been studied~\cite{FagnotFV09,BulteauFRV15}. A related problem that has arisen in applications related to radiation therapy is to cover a given set $S$ with the set of all subset sums of a set $X$ of minimal size~\cite{CollinsKSY07}. Both aforementioned problems were shown to be NP-Hard by reductions that are much less involved than ours, heavily utilizing that one is looking for the minimal set.

A very recent paper by Chen, Nadimpalli, Randolph, Servedio, and Zamir~\cite{ChenNRSZ24} studies the \emph{query} complexity of the same problem from Granville's question, in the \emph{property testing} model. They prove a lower bound of $\Omega(\sqrt{n})$ on the number of element queries to a set $S$ in an abelian group of size $n$ that are required to test whether $S$ is a sumset or $\eps$-far from being one. Needless to say, their (sublinear) result does not say whether the computational \emph{time} complexity of the problem is linear, polynomial, or exponential.

\subsection{Technical Overview} \label{sec:overview}
In this section, we give an intuitive overview of the new ideas leading to our results. We will mainly focus on the NP-hardness of the \emph{integer} Sumset Recognition problem, which already requires many of our insights. As may be expected, we prove NP-hardness by designing a polynomial-time reduction from the classical $3$-SAT\footnote{We quickly recall the 3-SAT problem: The input consists of a 3-CNF formula $\phi$ on $n$ variables $x_1, \dots, x_n$. That is, $\phi$ consists of $m$ \emph{clauses}, each of which consists of three \emph{literals} taking form $x_i$ or $\overline x_i$. An assignment $\alpha \in \set{0, 1}^n$ satisfies a positive literal $x_i$ if $\alpha_i = 1$ and a negative literal $\overline x_i$ if $\alpha_i = 0$. The assignment satisfies a clause if it satisfies at least one of its literals, and it satisfies~$\phi$ if it satisfies all clauses. The 3-SAT problem is to decide whether there is a satisfying assignment of $\phi$.} problem to Sumset Recognition. That is, given a $3$-SAT formula~$\phi$ on~$n$ variables, our task is to create a set $X_\phi \subseteq [\poly(n)]$ such that we can express $X_\phi = A + A$, for some set $A$, if and only if $\phi$ is satisfiable. 

We will present a conceptually natural approach that, as we will see later, requires significant technical work to work out. The baseline idea is to construct a set $X_\phi$ such that each feasible set~$A$, in the sense that $ A + A= X_\phi$, encodes an assignment. Then, in a second step, we will modify this construction in accordance with the input formula, such that only the sets $A$ corresponding to \emph{satisfying} assignments remain.

\paragraph{Step 1: Encoding Assignments}
Following this general idea, the first step is to construct a \emph{variable gadget,} which is a constant-sized set $V$ that can be expressed as the sumset of exactly two sets $V_0$ and $V_1$ (i.e., $V = V_0 + V_0 = V_1 + V_1$). We will interpret~$V_0$ as the $0$-assignment and $V_1$ as the $1$-assignment of a variable. For technical reasons that will become clear later, we also require that there are two elements~\makebox{$v_0 \in V_0 \setminus V_1$} and $v_1 \in V_1 \setminus V_0$.\footnote{Using a computer search, we have identified the smallest gadget that fits these requirements: $V = \set{0,\dots,18} \setminus \set{5}$, $V_0 = \set{0, 1, 2, 6, 7, 8, 9}$, $V_1 = \set{0, 1, 3, 6, 7, 8, 9}$, $v_0 = 2$ and $v_1 = 3$. However, equipped with the ``positioning'' and ``masking'' tools that we develop soon, we can find a significantly simpler and \emph{understandable} gadget. See \cref{sec:sat} for more details.} Then, we assemble $X_\phi$ to be a set of $n$ appropriately shifted copies of $V$:
\begin{equation*}
    X_\phi = \bigcup_{i \in [n]} (V + 2s_i).
\end{equation*}
The hope is that the sumset roots of $X_\phi$ are in one-to-one corresponds with assignments $\alpha \in \set{0, 1}^n$, and have the form 
\begin{equation*}
    A_\alpha = \bigcup_{i \in [n]} (V_{\alpha_i} + s_i).
\end{equation*}
This looks promising, as indeed the shifted set $V + 2s_i$ has two unique sumset roots, namely $V_0 + s_i$ and $V_1 + s_i$. However, a major problem with this approach is that $A_\alpha + A_\alpha$ also contains \emph{cross terms} of the form $s_i + s_j + V_{\alpha_i} + V_{\alpha_j}$, for $i \neq j$. These cross terms also have to be included into~$X_\phi$---but of course, we cannot ``hardcode'' any variable assignment~$V_{\alpha_i}$. That is, for each $i,j$ we need to add $s_i+s_j+V_{\alpha_i} + V_{\alpha_j}$ to $X_{\phi}$ without knowing whether each of $V_{\alpha_i}$ and $V_{\alpha_j}$ is $V_0$ or $V_1$. This puts us in a precarious situation: In order to build $X_\phi$ oblivious to the assignment, we would have to find a variable gadget $V$ with sumset roots~$V_0$ and $V_1$, i.e. $V=V_0+V_0=V_1+V_1$, which additionally satisfies that $V = V_0 + V_1$. If we could do this, we could simply add $s_i+s_j+V$ to $X_{\phi}$ for all $i,j$. Unfortunately, this seems too much to ask for, and we could not find such a gadget. Our actual solution for the cross terms is significantly more involved and uses non-trivial constructions from additive combinatorics. We will describe it towards the end of this overview; for now, let us ignore the problematic cross terms and continue with the general picture.

\paragraph{Step 2: Enforcing Satisfying Assignments}
The second step is to include the $m$ clauses into the construction of $X_\phi$ to make sure that only the roots $A_\alpha$ corresponding to \emph{satisfying} assignments~$\alpha$ survive. Our approach is as follows: Let $t_1, \dots, t_m$ denote a fresh set of shifts. To encode the clauses, we would like to force $A_\alpha$ to also include the sets $C_k$ defined as follows:
\begin{equation*}
    C_k = \bigcup_{\substack{x_i \text{ appears} \\ \text{positively in} \\ \text{the $k$-th clause.}}} \set{t_k - s_i - v_1} \cup \bigcup_{\substack{x_i \text{ appears} \\ \text{negatively in} \\ \text{the $k$-th clause.}}} \set{t_k - s_i - v_0}
\end{equation*}
Note that these sets depend on the formula $\phi$ but not on the assignment $\alpha$.
While in the reduction we control $X_{\phi}$ but not $A_{\alpha}$, we would like to add certain elements to $X_{\phi}$ that will force any feasible~$A_{\alpha}$ to include all $C_k$'s.
At the very least, we have to add all elements in $\bigcup_{k, \ell} (C_k + C_\ell)$ to $X_\phi$. In addition, and this is key, we insert all elements $t_1, \dots, t_m$ into $X_\phi$. Our intention is to force that $t_k \in A_\alpha + A_\alpha$ if and only if $\alpha$ satisfies the $k$-th clause. Indeed, if the $k$-th clause is satisfied by a positive literal~$x_i$~(i.e.,~$\alpha_i = 1$), then, recalling that $s_i+v_1 \in A_{\alpha}$, we have $t_k = (v_1 + s_i) + (t_k - s_i - v_1) \in A_\alpha + A_\alpha$. And similarly, if it is satisfied by a negative literal $\overline x_i$ (i.e., $\alpha_i = 0$) then since $s_i+v_0 \in A_{\alpha}$ we have that $t_k = (v_0 + s_i) + (t_k - s_i - v_0) \in A_\alpha + A_\alpha$.

\medskip
Ideally, this should complete the description of our reduction---however, due to various interesting challenges, the story does not end here. Specifically, we face two major problems, which we will describe, along with our solutions, in the following paragraphs. For both of these problems, we managed to find surprisingly clean, modular solutions in terms of the upcoming~\cref{lem:pos-int,lem:masking-int}.

\paragraph{Problem 1: Positioning}
The first issue in this outline is how to prove the ``completeness'' direction. I.e., given a set $A$ with $X_\phi = A + A$, can we guarantee that $\phi$ is satisfiable? If the set~$A$ takes the form~$A_\alpha$ from before, for some assignment $\alpha$, then this is plausible: We merely have to ensure that we can never spuriously express $t_j$ as the sum of two unrelated elements from $A_\alpha$. This can be achieved by choosing the set of shifts $\set{s_1, \dots, s_n, t_1, \dots, t_m}$ to be a (slightly modified) \emph{Sidon set}, i.e., a set without non-trivial solutions to the equation $a + b = c + d$. See \cref{sec:sat} for more details.

However, the more serious issue is that it is a priori not clear that $A$ looks like $A_\alpha$. Could it not happen that~$A$ looks entirely obscure? To rule this out, we conceptually would like to enforce that~$A$ includes only certain elements by requiring that $A \subseteq U$ for some prespecified set~$U$; we loosely refer to this constraint as ``positioning $A$''. Then we could strategically choose $U$ to be in line with the sets~$A_\alpha$---and only these sets---by
\begin{equation*}
    U = \bigcup_{i \in [n]} ((V_0 \cup V_1) + s_i) \cup \bigcup_{k \in [m]} C_k.
\end{equation*}
And indeed, we show that the superset condition~$A \subseteq U$ can be enforced by suffering only \emph{constant} blow-up in the universe size. Formally, we prove the following lemma: 

\begin{restatable}[Positioning for $\Int$]{lemma}{lemposint} \label{lem:pos-int}
Let $X, U \subseteq \set{0,\dots,n}$. There is a set~\makebox{$X' \subseteq \set{0, \dots,2^{24} n}$} that can be constructed in time $\poly(n)$ and satisfying that:
\begin{equation*}
    \exists A \subseteq U : X = A + A \qquad\text{if and only if}\qquad \exists A' \subseteq \Int : X' = A' + A'.
\end{equation*}
\end{restatable}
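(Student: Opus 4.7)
The plan is to construct $X'$ by enclosing $X$ and $U$ inside a ``frame gadget'' in the larger interval $\set{0, \dots, 2^{24} n}$. The frame comprises several disjoint regions: an anchor $0$, a slot region $[T, T + n]$ where $X' \cap [T, T + n]$ will equal $U + T$, an output region $[2T, 2T + 2n]$ containing the shifted target $X + 2T$, a partner region, and several ``interval gadget'' regions. The anchor $0 \in X'$ forces $0 \in A'$ (using that $A'$ must be non-negative once we include a maximum-element anchor), and from $0 \in A'$ we obtain $A' \subseteq X'$ via $A' = 0 + A' \subseteq A' + A' = X'$. Combined with the slot design, this forces $A' \cap [T, T + n] \subseteq U + T$, so extracting $A := (A' \cap [T, T + n]) - T$ gives $A \subseteq U$, while the output region ensures $(A + T) + (A + T) = X + 2T$, i.e.\ $A + A = X$.

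The central technical difficulty is the completeness direction: naively matching $(A' + A') \cap [T, T + n] = U + T$ would force $A = U$ exactly, because the only pair of $A'$ landing in the slot is $0 + (A + T) = A + T$. To allow $A \subsetneq U$, I would attach a partner set $P = \set{p} \cup ((U + T) - p)$ to $A'$ for a fresh shift $p$; then the pair $p + ((u + T) - p) = u + T$ covers every $u + T \in U + T$ regardless of $A$, so the slot region is entirely covered by $A' + A'$ in the forward direction for any $A \subseteq U$ with $A + A = X$.

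Introducing $P$ spawns new $A$-dependent cross-sums (such as $(A + T) + p$ at $[T + p, T + p + n]$ and $(A + T) + (U + T - p)$ near $[2T - p, 2T - p + 2n]$), and to absorb these I would place additional interval gadgets $B_i$ of length at least $n$ inside $A'$ at carefully chosen shifts. The key point is that whenever $\min A = 0$, $\max A = n$, and the gaps in $A$ are at most $|B_i|$, the cross-sum $(A + T) + B_i$ is a full interval of $\Int$ independent of the specific admissible $A$; including this full interval in $X'$ absorbs the $A$-dependent contribution. A constant number of such absorption rounds handles every $A$-dependent cross-sum, while the remaining pairs (such as $p + B_i$, $(U + T - p) + B_i$, and $B_i + B_j$) yield fixed sets that are included verbatim in $X'$. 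The forward direction then sets $A' = \set{0} \cup (A + T) \cup P \cup \bigcup_i B_i$ and checks $A' + A' = X'$ term by term; the backward direction uses the disjoint-region structure of $X'$ to decompose $A'$ region by region and read off $A \subseteq U$ with $A + A = X$.

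The main obstacle is ensuring that the absorption cascade terminates after constantly many rounds, keeping the total universe blowup bounded by the fixed constant $2^{24}$. Each new gadget $B_i$ can introduce further $A$-dependent cross-sums needing further absorption, and one must choose shifts, lengths, and positions so that all the resulting regions are pairwise disjoint and close up into a fixed set of ``types''. In particular, one must guarantee that gadget interactions do not accidentally overlap the slot region $[T, T + n]$ (which would destroy the positioning $A \subseteq U$) or the output region $[2T, 2T + 2n]$ (which would destroy the constraint $A + A = X$). Additionally, the construction depends on enforcing $\min A = 0$ and $\max A = n$ via including extremal elements inside $X$ itself, which ensures that the interval identities for $(A + T) + B_i$ genuinely hold. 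Carefully laying out all the regions to close the cascade and avoid spurious interactions is where the construction becomes delicate.
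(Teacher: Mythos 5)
Your high-level outline lives in the same neighborhood as the paper's proof: both carve the large universe into length-$n$ regions, place $U$ in a ``slot'' and $X$ in an ``output'' region, and use full intervals to absorb cross-sums. But you are missing the key structural tool that makes the completeness direction go through, and this gap cannot be papered over by the cascade argument you gesture at.

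The crux is your claim that $0 \in A'$ gives $A' \subseteq X'$ and that this, combined with the slot design, lets you extract $A = (A' \cap [T, T+n]) - T$ and conclude both $A \subseteq U$ and $A + A = X$. The first conclusion is fine, but the second is not. Knowing $A' \subseteq X'$ tells you only that $A'$ lives inside the support of $X'$; it does not tell you \emph{which} regions of $X'$ the set $A'$ actually occupies. In particular, since $X'$ is nonempty on the output region $[2T, 2T+2n]$, nothing in your argument prevents $A'$ from containing elements there. If $x + 2T \in A'$ for some $x \in X$, then $x + 2T = 0 + (x + 2T)$ is already a representation in $A' + A'$, so the condition $(A' + A') \cap [2T, 2T+2n] = X + 2T$ no longer forces $x \in A + A$. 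More generally, each of your partner set, interval gadgets, and their pairwise sums could contribute to the output region, and without a mechanism to exclude those contributions you cannot conclude $X \subseteq A + A$. The cascade of new $A$-dependent cross-sums that you acknowledge as the ``main obstacle'' is only half the problem; the other half is that the output region itself is not clean.

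The paper's proof (\cref{lem:pos-int}) resolves both issues at once with a \emph{primitive skeleton set} $I$ of constant size: it first shows that the set $J$ of occupied positions of $A'$ must satisfy $J + J = I + I$ and then invokes primitivity to conclude $J = I$ exactly. It then arranges (via Lemma~\ref{lem:skeleton-int}~(ii)) that the position $2a$ holding $X$ has the \emph{unique} representation $a + a$ in $I + I$, so only $A'_a + A'_a$ can land in the output region, and similarly uses the unique representation of $2s - 2a$ to pin $A'_{s-a} = \{0, n\}$ and thereby derive $A'_a \subseteq U$. Your $A' \subseteq X'$ observation is strictly weaker than primitivity: it bounds which positions $A'$ \emph{might} occupy, but gives no uniqueness. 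Unless you build an analogous positional uniqueness mechanism into your frame gadget---effectively re-deriving the skeleton set and its representation counts---your completeness argument does not close, and neither does the claim that constantly many absorption rounds suffice.
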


The idea behind the proof of \cref{lem:pos-int} is as follows: We first design a \emph{skeleton set} $I \subseteq \set{0,\dots,c}$ over some constant-size universe (in the lemma, $c = 2^{24}$). We require that (i) $I$ is \emph{primitive}, i.e.\ that its sumset $I + I$ cannot be expressed as $J + J$ for some other set $J$. And (ii),~$I$ shall contain a specific constellation of integers tailored towards the following approach.

In the reduction, we will partition $\set{0,\dots,cn}$ into $c$ successive length-$n$ intervals and use the skeleton set $I$ for determining which of these intervals should be non-empty in the set $X'$. Let $X'_i$ denote the part of $X'$ that falls into the $i$-th interval, and similarly define $A'_i$. Then (i) implies that by choosing $X'$ as a subset of the intervals associated to $I + I$ (i.e.,~\makebox{$X'_i \neq \emptyset$} if and only if~\makebox{$i \in I + I$}), we can ensure that any set $A'$ with $X' = A' + A'$ is a subset of the intervals associated to $I$ (i.e.,~\makebox{$A'_i \neq \emptyset$} if and only if~$i \in I$).

Next, in order to enforce that $A \subseteq U$ in addition to $A+A=X$, we will place copies of $X$,$U$,$\{0\}$, and $[2n]$ at appropriately chosen intervals in $X'$ that are informed by the property (ii) of the skeleton set $I$. This, in turn, will force us to appropriately place copies of $A,U,\{0\}$, and~$[n]$ in $A'$. In a bit more detail about property (ii), our skeleton set $I$ contains the constellation $a, u, s - a, s - u$ for some integers $a, u, s$ that act independently (i.e., that satisfy no non-trivial equations, e.g. their pairwise sums are distinct) and that cannot be represented as the pairwise sum of elements from~$I$. The insight is that by setting $X'_{2a} = X$ and $X'_{2s - 2a} = \set{0}$ and $X'_{s} = U$, we in turn force that~\makebox{$A'_a + A'_a = X$} and~\makebox{$A'_{s-a} = \set{0}$} and
\begin{equation*}
    (A'_a + A'_{s - a}) \cup (A'_u + A'_{s - u}) = A'_a \cup (A'_u + A'_{s-u}) = U.
\end{equation*}
The latter effectively enforces that $A'_a \subseteq U$; to get the equality the solution separately ensures that~\makebox{$(A'_u + A'_{s-u}) = U$}.
In particular, the set $A = A'_a$ satisfies exactly the two conditions $X = A + A$ and $A \subseteq U$. We omit further details here, and instead refer to \cref{sec:pos}. Let us finally emphasize that throughout, we have not optimized most constants and that the specific constant $2^{24}$ can likely be dramatically lowered.

\paragraph{Problem 2: Masking (And Stopping the Infinite Game)} 
The second major problem we have to deal with is the problematic cross terms we encountered before. Recall that the issue was that the set $A_\alpha$ as constructed before consists of several copies of variable and clause gadgets. In the sumset $A_\alpha + A_\alpha$, our intention is that certain variables and clause gadgets interact (in the sense that the sums they cause are meaningful). However, the set $A_\alpha + A_\alpha$ also contains the cross terms caused by variable-variable, variable-clause, and clause-clause combinations. And even for the relevant variable-clause combinations, we have to cover the elements $t_k \pm 1$, which might or might not be present. For concreteness, focus again on the cross term $V_{\alpha_i} + V_{\alpha_j} + s_i + s_j$ from before, caused by two variable gadgets. As already argued, we cannot simply include this set into~$X_\phi$ as this would force fixed values for the variables $i$ and $j$.

Instead, we follow a different idea, which we call ``masking'': We will add to $A_\alpha$ even more sets $M_1, \dots, M_s$ (so-called ``masks''). The goal is that each cross term, say, $V_{\alpha_i} + V_{\alpha_j} + s_i + s_j$, is contained in a sumset $M_a + M_b$. In the simplest form, we choose the masks to be singleton sets and hit all unwanted cross term elements one by one. This requires some bookkeeping to make sure that the masks do not interfere with the relevant positions, like $t_j$.

Disregarding these details, there is a fundamental conceptual flaw in this idea: By including masks into $A_\alpha$, we also introduce \emph{new} cross terms involving these masks, e.g.,~\makebox{$V_{\alpha_i} + s_i + M_a$}. To cover these new cross terms, we would have to introduce even more masks, which in turn cause more cross terms---and this game continues infinitely. Is this inherent, or is there any hope to salvage this idea and stop the infinite game?

Perhaps surprisingly, it turns out that by choosing the masks appropriately---singletons in the first iteration of the game and intervals in the second iteration---we can stop the game after two iterations. The insight is that, with some extra work, after the first iteration all cross terms become predictable in the sense that we can determine their minimum and maximum elements. Then, using that for any set $A \subseteq \set{0,\dots,n}$ we have $A + \set{0,\dots,n} = \set{\min(A), \dots, \max(A) + n}$, after the second iteration we can simply include all cross terms into $X_\phi$ without introducing new masks. The details of this idea turn out to be extremely technical.

In summary, we prove the following lemma:

\begin{restatable}[Masking for $\Int$]{lemma}{lemmaskingint} \label{lem:masking-int}
Let $X, Y, U \subseteq \set{0,\dots,n}$. There are sets \makebox{$X', U' \subseteq \set{0,\dots, 2^{17} n^2}$} that can be constructed in time~$\poly(n)$ and satisfying that:
\begin{equation*}
    \exists A \subseteq U : X \subseteq A + A \subseteq Y \qquad\text{if and only if}\qquad \exists A' \subseteq U' : X' = A' + A'.
\end{equation*}
\end{restatable}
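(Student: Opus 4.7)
The plan is to construct $X'$ and $U'$ over a universe of size $O(n^2)$ by placing a shifted copy of $A$ at a fixed offset $s_0$ and appending two successive layers of masks to $U'$: a first layer of singletons and a second layer of length-$n$ intervals. The witness $A'$ for the reduced instance will consist of this shifted copy of $A$ together with a carefully chosen sub-family of the masks. The whole construction is organized via a Sidon-style arrangement of disjoint ``regions'' inside $\set{0,\dots,2^{17}n^2}$, so that the various pairwise sums---$A$ with itself, $A$ with each mask layer, and the mask layers with each other---land in separate slots and can be controlled independently.

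First, I would include $U + s_0 \subseteq U'$ to confine the primary copy of $A$ to the window $[s_0, s_0+n]$, and include $Y + 2s_0 \subseteq X'$ so that every element produced by $(A+s_0)+(A+s_0) = (A+A)+2s_0 \subseteq Y + 2s_0$ is absorbed into $X'$, while the required lower bound $X \subseteq A+A$ is encoded by the fact that $X + 2s_0$ sits inside $Y + 2s_0$. Since $A+A$ may be a strict subset of $Y$, the elements of $(Y \setminus (A+A)) + 2s_0$ must still be produced somewhere inside $A'+A'$; the role of the first mask layer is precisely to supply them.

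Second, for every $y \in Y$ and every candidate anchor $u \in U$, I would place a singleton mask at position $y + s_0 - u$ in a dedicated region. In the forward direction, given any $A \subseteq U$ with $X \subseteq A+A \subseteq Y$, I fix a canonical anchor $u^\star \in A$ (say $u^\star = \min A$) and include in $A'$ exactly the singletons $\set{y + s_0 - u^\star : y \in Y}$; then $(u^\star + s_0) + (y + s_0 - u^\star) = y + 2s_0$ covers every optional element of $Y$. The unavoidable side effect is that each selected singleton $p$ contributes a translated copy $A + s_0 + p$ to $A' + A'$, together with singleton-singleton sums, all of which must be matched by elements placed into $X'$ in advance.

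Third, to absorb the unpredictable translates $A + s_0 + p$ without triggering an infinite cascade of further masks, I would introduce a second layer of length-$n$ interval masks placed in additional disjoint regions. The crucial identity $A + I = [\min A + \min I,\ \max A + \max I]$ (valid because $A \subseteq \set{0,\dots,n}$ and $I$ has length at least $n$) turns every interval-plus-$A$ cross term into a predictable interval whose endpoints depend only on $\min A$ and $\max A$. By building forcing gadgets into $X'$ (for instance, isolated elements that can only arise as $\min A + \min A$ or $\max A + \max A$, pinning down the extremes of $A$), these endpoints become known in advance. The resulting intervals, together with all singleton-singleton and interval-interval sums, can then be written into $X'$ in a formula-independent way, which is exactly the two-step termination of the masking game promised in the overview. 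Conversely, given any $A' \subseteq U'$ with $A' + A' = X'$, I recover $A$ by shifting $A' \cap [s_0, s_0+n]$ back by $s_0$; the Sidon-style separation of regions guarantees that each slot of $X'$ is fed by exactly one interaction type, which forces the decomposition of $A'$ into a shifted $A \subseteq U$ plus masks and in turn forces $X \subseteq A+A \subseteq Y$. The main obstacle, and the source of the ``extremely technical'' character flagged in the overview, is the bookkeeping in this third step: verifying that every singleton-singleton, singleton-interval, and main-interval interaction lands in a region of $X'$ whose content is independent of $A$, that the forcing of $\min A$ and $\max A$ is consistent with the freedom one needs on $A$, and that every element placed into $X'$ is in fact realized by some pair in $A'+A'$ while no spurious element leaks in.
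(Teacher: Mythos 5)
Your high-level plan matches the paper's strategy: two layers of masks (singletons then intervals), region separation à la Sidon, pinning down $\min A$ and $\max A$ so that the interval layer becomes formula-independent. The paper gets the last ingredient via a preliminary range normalization (its Lemma 7), which is morally your "forcing gadget" idea. The trouble is in the singleton layer, where your mechanism has a concrete flaw that would break the reduction.

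You produce the optional elements of $Y$ via $(u^\star + s_0) + (y + s_0 - u^\star) = y + 2s_0$, i.e., a singleton mask interacting with the anchor $u^\star \in A$. But for that equation to land in the $Y + 2s_0$ slot, the singleton must sit at absolute position $y + s_0 - u^\star$, and since $y,u^\star \in [0,n]$ these positions lie in $[s_0 - n,\, s_0 + n]$---overlapping the very window $[s_0, s_0 + n]$ reserved for $A$. So the singletons cannot simultaneously be "in a dedicated region" and interact with $A$ as claimed; the two requirements contradict each other. Once the singletons leak into the $A$ window, $U'$ in that window is no longer $U + s_0$, and the completeness direction collapses: $A' \cap [s_0, s_0+n] - s_0$ need not be a subset of $U$, and (worse) by including singletons $y + s_0 - u$ for arbitrary $u \in U$ and $y \in Y$ (you even include $y \in X$, not just $y \in Y \setminus X$) one can manufacture $x + 2s_0 \in A' + A'$ for $x \in X$ without $x \in A + A$ ever holding. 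The paper avoids all of this with a different singleton construction: for each $z \in Z = Y \setminus X$ it places \emph{two} singletons equal to $\set{z/2}$ at symmetric offsets $+z \cdot 4N$ and $-z \cdot 4N$, so the pair interacts \emph{with itself}, $\set{z/2}+\set{z/2} = \set{z}$, to produce $z$ back in the central slot, entirely without touching $A$ or an anchor. (The factor $z/2$ is why the normalization step makes everything even.) This is the missing idea: the mask must generate the optional elements on its own, from a fixed, $A$-independent position, rather than by borrowing an anchor from $A$.
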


This lemma should be understood as follows: The Sumset Recognition problem is equivalent to the problem of detecting the existence of a set $A$ such that $A + A$ does not have to precisely match $X$. Rather, there are positions $Y \setminus X$ that we don't care about (and that are masked in the reduction). See \cref{sec:masking} for more details and the proof of \cref{lem:masking-int}.

\paragraph{Additional Challenges for \smash{\boldmath$\Field_p^d$}}
In contrast to many other settings in additive combinatorics, in our context, dealing with integers is simpler than dealing with, say,~\smash{$\Field_2^d$} or~\smash{$\Field_3^d$}. Even seemingly simple tasks like the construction of primitive sets are significantly more involved for \smash{$\Field_2^d$} and~\smash{$\Field_3^d$} (see \cref{sec:pos:sec:Fp}). This is mainly, but not exclusively, due to a special role that~\smash{$\Field_2^d$} takes here: In contrast to the integers and finite fields of odd order, there are order-$2$ elements $a$ (i.e., satisfying that $2a = 0$); in fact all nonzero elements have order $2$. This has several complicating consequences: For instance, there are no sets with a unique sumset representation, as whenever $X = A + A$ then also~\makebox{$X = (A + a) + (A + a)$}. Another complication is that in the positioning step we relied on the fact that $2a \neq 0$; over $\Field_2^d$ we have to take a different route which leads to weaker positioning and masking constructions (see \cref{lem:pos-Fp,lem:masking-Fp}) where we replace the conditions that $X = A + A$ and~\makebox{$X \subseteq A + A \subseteq Y$} by respectively $X = A + B$ and $X \subseteq A + B \subseteq Y$. Nevertheless, even these weaker versions are ultimately sufficient to prove NP-hardness.

\subsection{Conclusions and Open Questions}
In this paper, we have proven the NP-completeness of what may be considered the most basic computational question related to additive combinatorics. This opens up the door to investigating the time complexity of many other problems related to sumsets. Let us conclude with three specific questions that we find interesting.

First, while we have shown that it is not possible to recognize sumsets \emph{quickly}, it is still open to determine the extent to which non-trivial algorithmic strategies are helpful. In other words, we would like to know the fine-grained complexity of the Sumset Recognition problem. Our reduction takes a $3$-SAT instance on $n$ variables and produces a set of integers in an interval of size $O(n^4)$. This leads to a conditional~\smash{$2^{\Omega(n^{1/4})}$} lower bound; could there be a sub-exponential~\smash{$2^{O(n^{0.99})}$} algorithm?

\begin{theorem}[ETH-Hardness over $\Int$] \label{thm:eth-hard-integer}
The Sumset Recognition problem over $\Int$ cannot be solved in time~\smash{$2^{\order(n^{1/4})}$}, unless the Exponential Time Hypothesis fails.
\end{theorem}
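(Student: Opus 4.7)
The plan is to observe that the reduction behind \cref{thm:np-hard-integer} is already quantitatively tight enough to yield the ETH lower bound, and that what remains is essentially bookkeeping plus one standard invocation of the Sparsification Lemma. As announced in the paper right after \cref{thm:np-hard-integer}, the reduction takes a 3-SAT formula $\phi$ on $N$ variables and $N$ clauses and, in polynomial time, produces a Sumset Recognition instance $X_\phi$ of size $n = |X_\phi| = O(N^4)$ (with the same bound on $|\mathbb{G}|$). Thus any algorithm for Sumset Recognition running in time $2^{o(n^{1/4})}$ would, composed with this reduction, decide 3-SAT in time $\poly(N) + 2^{o((N^4)^{1/4})} = 2^{o(N)}$, contradicting ETH.

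The one subtlety is that ETH is usually stated with $N$ counting the number of \emph{variables} only, whereas the reduction's size bound depends on both the number of variables and the number of clauses. To handle this uniformly, I would first apply the Sparsification Lemma of Impagliazzo, Paturi, and Zane, which (under ETH) lets us restrict to 3-SAT instances with $m = O(N)$ clauses. After this preprocessing, the reduction's output bound $O((N+m)^4) = O(N^4)$ is valid in terms of $N$ alone, and the chain of implications above goes through verbatim.

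The main step left is a size check: one must verify that the $O(N^4)$ bound announced for the reduction is indeed attained. Tracing through the outline in \cref{sec:overview}, before applying the auxiliary lemmas one arranges $\Theta(N)$ variable gadgets and $\Theta(N)$ clause gadgets at Sidon-set shifts inside an interval of length $O(N^2)$; a single application of \cref{lem:masking-int} then squares the universe to $O(N^4)$, and \cref{lem:pos-int} only contributes a constant factor on top. Since $n \le O(N^4)$ yields $n^{1/4} \le O(N)$, a hypothetical $2^{o(n^{1/4})}$-time algorithm for Sumset Recognition translates to a $2^{o(N)}$-time algorithm for 3-SAT, completing the ETH reduction. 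I do not foresee any substantial obstacle beyond this universe-size accounting, which is already implicit in the proof of \cref{thm:np-hard-integer}.
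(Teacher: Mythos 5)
Your proposal is correct and follows essentially the same route as the paper's proof: invoke the Sparsification Lemma to restrict to $O(N)$ clauses, then chain the $3$-SAT reduction from \cref{thm:np-hard-integer} (which produces an instance in $[0, 2^{57}(n+m)^4]$) and observe that $n^{1/4} = O(N)$, so a $2^{o(n^{1/4})}$ algorithm would yield a $2^{o(N)}$ algorithm for $3$-SAT. The only detail you elide, which the paper makes explicit, is the bookkeeping needed to absorb the $\eps$-dependent sparsification constant $c_\eps$ and the $2^{\eps N}$ instances output by sparsification into the final $2^{\eps N}\poly(N)$ bound by choosing $\delta$ as a function of $\eps$; this is standard and does not affect correctness.
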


A second question concerns the average case complexity of Sumset Recognition (and factoring). Can we generate hard instances by taking a random set $A$, and asking the algorithm to retrieve $A$ from the set $A+A$?

Finally, it would be interesting to study the approximability of the problem: given a set $S$, find the largest $S' \subseteq S$ such that $S'$ is a sumset.

\section{Preliminaries} \label{sec:preliminaries}
Throughout, we write $[n] = \set{1, \dots, n}$ and $[a,b] = \set{a,a+1, \dots,b}$ where $a<b$ are integers. We write $C = A \sqcup B$ whenever $C$ is the union of disjoint sets $A$ and $B$. For a subset $A \subseteq \Field_p^d$, and $v \in \Field_p^d$, we write $v \perp A$ if $v \mult u = 0$ for all $u \in A$.  
For subsets $A, B \subseteq G$ of some abelian group, we define their \emph{sumset} $A + B = \set{a + b : a \in A, b \in B}$. We write $r_{A + B}(x) = |\set{(a, b) \in A \times B : a + b = x}|$ to denote the number of representations of $x$ in the sumset $A + B$. Over finite fields, the following well-known fact about the size of $A + B$ will turn out useful; see~e.g.~\cite[Theorem~5.4]{TaoV06}.

\begin{lemma}[Cauchy-Davenport \cite{Cauchy1813,Davenport1935}] \label{lem:cauchy-davenport}
Let $A, B \subseteq \Field_p$. Then $|A + B| \geq \min(p, |A| + |B| - 1)$.
\end{lemma}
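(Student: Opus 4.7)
My plan follows the classical two-step proof due essentially to Cauchy and Davenport, leveraging the primality of $p$.

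First, if $|A| + |B| \geq p + 1$, then for every $c \in \Field_p$ the sets $A$ and $c - B := \set{c - b : b \in B}$ are subsets of $\Field_p$ with combined cardinality exceeding $p$, so by pigeonhole they must intersect; any common element $a = c - b$ witnesses $c = a + b \in A + B$. Hence $A + B = \Field_p$ and $|A + B| = p = \min(p, |A| + |B| - 1)$.

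In the remaining case $|A| + |B| \leq p$, I would iteratively apply Dyson's $e$-transform. For $e \in \Field_p$, define
\begin{equation*}
    A^* := A \cup (B + e), \qquad B^* := B \cap (A - e).
\end{equation*}
Inclusion-exclusion gives $|A^*| + |B^*| = |A| + |B|$, and a short case split on whether a given summand lies in $A$ or in $B + e$ (respectively in $B$ or in $A - e$) yields $A^* + B^* \subseteq A + B$. Whenever some $e$ satisfies $\emptyset \neq B^* \subsetneq B$ I replace $(A, B)$ by $(A^*, B^*)$ and repeat; since $|B|$ strictly decreases in each step the procedure must terminate.

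The main obstacle is analyzing the terminal configuration. Choosing $e = a_0 - b_0$ for arbitrary $a_0 \in A$, $b_0 \in B$ already guarantees $b_0 \in B^*$, so $B^* \neq \emptyset$; thus at termination we must have $B^* = B$ for every such $e$, which unfolds to $B + (a_0 - b_0) \subseteq A$ for all $a_0 \in A, b_0 \in B$. Taking unions, this rearranges to $A + (B - B) \subseteq A$, i.e.\ $A$ is closed under translation by every element of $B - B$. If $|B| \geq 2$ then $B - B$ contains a nonzero element, which by primality of $p$ additively generates all of $\Field_p$; hence $A = \Field_p$, contradicting $|A| + |B| \leq p$. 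So termination forces $|B| = 1$, in which case $A + B$ is a translate of $A$ and the chain of inclusions gives $|A + B| \geq |A^* + B^*| = |A^*| = (|A^*| + |B^*|) - 1 = |A| + |B| - 1$, where the sum is the preserved value from the original pair. This completes the bound.
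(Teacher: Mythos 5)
The paper does not actually prove this lemma; it states Cauchy--Davenport as a classical fact and refers the reader to Tao and Vu for a proof. So there is no proof in the paper to compare against. Your argument via the Dyson $e$-transform is a correct, self-contained proof and is in fact one of the two standard proofs of the theorem (the other being the Combinatorial Nullstellensatz / polynomial-method proof). Case one handles $|A|+|B|\geq p+1$ by pigeonhole; case two repeatedly applies the $e$-transform, which preserves $|A|+|B|$ and only shrinks $A+B$, until $|B|$ drops to $1$. Your termination analysis is exactly right: for $e=a_0-b_0$ with $a_0\in A$, $b_0\in B$, the transformed $B^*$ is nonempty, so at a fixed point $B\subseteq A-e$ for all such $e$, which unfolds to $A+(B-B)\subseteq A$; primality of $p$ then forces $A=\Field_p$ unless $|B|=1$, contradicting $|A|+|B|\leq p$. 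One small point worth flagging: both your argument (choosing $a_0\in A$, $b_0\in B$; concluding $A=\Field_p$ from $A+\Field_p\subseteq A$) and the lemma statement itself tacitly assume $A$ and $B$ are nonempty. Without that the inequality is false (take $A=\emptyset$, $|B|\geq 2$). This is the universal convention in the additive-combinatorics literature and does not affect the paper's use of the lemma, but it deserves a word in a careful write-up.
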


We also introduce some new terminology. Inspired by the analogy to factoring, we call a set $A$ \emph{irreducible} if it cannot be expressed as $A = B + C$ for sets of size $|B|, |C| \geq 2$. Let us call a set $A$ \emph{primitive} if, whenever we can express $A + A = B + B$, then $B = A + x$ for some group element $x$ of order $2$, i.e., $2x = 0$. That is, up to trivial shifts $x$, for a primitive set $A$ the representation as the sumset $A + A$ is unique.

\section{Positioning} \label{sec:pos}
In this section, we establish the first step in our chain of reductions and prove that the Sumset Recognition problem is equivalent to testing whether a set $X$ can be expressed as a sumset~\makebox{$X = A + A$} where $A \subseteq U$ is constrained by some prespecified superset $U$ (i.e., we ``position'' $A$). For the integers, we formally prove the following lemma:

\lemposint*

In \cref{sec:pos:sec:int}, we provide a proof of \cref{lem:pos-int}. This can also be considered as a ``warm-up'' for the analogous, considerably more involved positioning lemma over $\Field_p^d$ presented in \cref{sec:pos:sec:Fp}. Specifically, the advantage of the integers here is that we can solve many problems simply by choosing sufficiently large constants. This fails, for obvious reasons, over $\Field_p^d$.

\subsection{Positioning for \texorpdfstring{$\Int$}{Z}} \label{sec:pos:sec:int}
The idea behind our proof is that we first construct an appropriate ``skeleton set'' $I$ (see the discussion around \cref{lem:pos-int} in Section~\ref{sec:overview}). We will then identify certain length-$n$ intervals in $\Int$ with elements in the skeleton set, and construct~$X'$ in such a way that whenever $X' = A' + A'$, an interval in $A'$ is nonempty if and only if the corresponding element is present in the skeleton set. This forces $A'$ to have a lot of structure, which we will exploit to impose the superset condition. The precise structure we require is specified in the following lemma; while these conditions might appear obscure at first, we hope that they will become understandable in following the proof of \cref{lem:pos-int}.

\begin{lemma}[Skeleton Set for $\Int$] \label{lem:skeleton-int}
There are distinct elements $a, u, s \in [0, 2^{21}]$ and a set $I'' \subseteq [0, 2^{21}]$ satisfying the following conditions for $I' = \set{a, u, s - a, s - u}$:
\begin{enumerate}[label=(\roman*)]
    \item $I = I' \sqcup I''$ is primitive.
    \item $r_{I + I}(2a) = r_{I + I}(2s - 2a) = r_{I + I}(2s - 2u) = 1$ and $r_{I + I}(s) = 4$.
    \item $(I' + I') \setminus \set{2a, 2s - 2a, 2s - 2u, s} \subseteq I'' + I''$.
\end{enumerate}
\end{lemma}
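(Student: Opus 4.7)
The plan is to construct $a, u, s$ and $I''$ explicitly, in three layers. First I pick the base parameters: $a$ and $u$ will be small constants (say $a=0$, $u=2$) and $s$ will be a large value of order $\Theta(2^{21})$, so that $I' = \{a, u, s-a, s-u\}$ naturally splits into a low cluster $\{a,u\}$ and a high cluster $\{s-a, s-u\}$. The pairwise sums $I'+I'$ then fall into three well-separated scales: sums near $0$ from low$+$low, sums near $s$ from low$+$high (producing the value $s$ itself along with $s+a-u$ and $s+u-a$), and sums near $2s$ from high$+$high. With this separation, condition (ii) already holds at the level of $I'$ alone: each of $2a$, $2s-2a$, $2s-2u$ has a unique ``diagonal'' origin, and $s$ has exactly the four ordered representations $(a, s-a), (s-a, a), (u, s-u), (s-u, u)$.

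Next I would construct $I''$ to serve two purposes. For condition (iii), each of the five sums $2u,\, a+u,\, s+a-u,\, s+u-a,\, 2s-a-u$ needs to be realized as $x+y$ with $x,y \in I''$; for every such target I include one appropriate pair of integers placed at its natural scale. At the same time, $I''$ must not disturb (ii) globally: no newly introduced pair inside $(I'+I'')\cup(I''+I'')$ should hit $2a$, $2s-2a$, or $2s-2u$, and the multiplicity of $s$ must stay exactly $4$. These are finitely many collision constraints, and given the very loose universe budget $[0, 2^{21}]$ combined with the constant size of $I''$, they can all be satisfied by locally shifting the chosen pairs within their respective scales. Thus conditions (ii) and (iii) together reduce to a bookkeeping exercise rather than a structural obstruction.

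The hard part will be primitivity, condition (i). Over $\Int$ the only element with $2x=0$ is $x=0$, so primitivity demands that $B+B = I+I$ implies $B = I$ outright. I plan to argue this via a step-by-step reconstruction: $\min(B)$ and $\max(B)$ are forced by the extremes of $I+I$, and every $\sigma \in I+I$ with $r_{I+I}(\sigma)=1$ that is uniquely represented as $x+y$ with $x,y \in I$ forces $\{x,y\} \subseteq B$. Iterating such deductions should recover all of $I$, provided $I''$ is built to supply enough uniquely represented ``witness sums'' chaining through every element. To ensure this, I would additionally embed a short primitive scaffold (something like $\{0,1,3\}$, suitably translated) at both the low and the high cluster, so that each cluster of $I$ contains a rigid sub-structure whose sumset pins it down. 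The remaining task is to rule out any exotic $B \neq I$ matching $I+I$, which, because everything is constant-sized, reduces to a finite case analysis (or, equivalently, direct computer verification) inside the budgeted universe. All chosen elements fit inside $[0, 2^{21}]$ because only a constant number of them are placed, at scales $O(1)$ or $\Theta(s)$, leaving enormous slack.
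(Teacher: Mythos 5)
Your proposal is in the right spirit---build $I$ explicitly at well-separated scales, realize the five surplus sums in $I'+I'$ by adding designated pairs to $I''$, and rely on the tiny universe to keep all collision bookkeeping manageable---but it has a genuine gap exactly where you flag ``the hard part,'' namely condition~(i).

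The paper closes this gap with a single clean device that your proposal is missing: it proves that for \emph{any} $A \subseteq [0,n]$, the set $A \cup \{4n\}$ is primitive (via a short interval-support argument: if $B+B = (A+A)\cup(A+4n)\cup\{8n\}$, then necessarily $4n\in B$ and $B\setminus\{4n\}\subseteq[0,n]$, and the middle interval forces $B\setminus\{4n\}=A$). It then chooses $(j_1,\dots,j_5,a,u,s)=(5^0,\dots,5^7)$ so that every multi-subset of these eight numbers of multiplicity at most $4$ has a distinct sum --- this single observation delivers condition~(ii) and all the non-collision constraints for free --- puts $I''=\{e-j_e,\ j_e : e\in E\}\cup\{4\cdot 5^8\}$, and invokes the lemma to get primitivity with no case analysis at all.

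By contrast, your primitivity argument has two problems. First, the deduction ``if $\sigma\in I+I$ has a unique representation $\sigma=x+y$ with $x,y\in I$, then $\{x,y\}\subseteq B$'' is not sound in general: uniqueness of the representation over $I$ only says that $\sigma$ must be hit by \emph{some} pair in $B$, not that this pair equals $(x,y)$, unless you also argue that no other pair of integers in the ambient universe can sum to $\sigma$ while keeping $B+B\subseteq I+I$. (This does work for $\min$ and $\max$, but the interior elements need more.) Second, ``embed a primitive scaffold like $\{0,1,3\}$ at each cluster'' is a heuristic, not a proof: cross-cluster terms and the masking elements in $I''$ could in principle create alternative reconstructions, and you ultimately defer this to ``a finite case analysis / computer verification'' that you do not carry out and whose feasibility (over a universe of size $2^{21}$) is not obvious. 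The paper's make-primitive lemma is precisely the structural insight that eliminates the need for any such case analysis; without it or an equivalent, the proposal does not constitute a proof of~(i).
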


In other words, the second item postulates that $2a$ has a unique representation in the sumset $I + I$, namely, $2a = a + a$. The same holds for $2s - 2a$ and $2s - 2u$. The element $s$ has four representations: $s = a + (s - a)$, $s = u + (s - u)$, and the two symmetric representations obtained by swapping the two summands; in the following, we will often neglect these symmetric options for simplicity.

\begin{proof}[Proof of \cref{lem:pos-int}]
For convenience, let us modify the input so that $\min(U) = \min(X) = 0$ and that $\max(U) = n$ and $\max(X) = 2n$ (by removing all elements larger than $\max(X) / 2$ from $U$ and shifting the sets if necessary; this transformation only decreases $n$). Let~\makebox{$a, u, s, I, I', I''$} be as in the previous lemma and write $m = 2^{21}$. We choose~\makebox{$X' = \set{i \cdot 4n + x : i \in I + I, x \in X'_i}$}, where the sets $X'_i \subseteq [0,  2n]$ are defined as follows:
\begin{alignat*}{3}
    &X'_s &&= U + \set{0, n}, \\
    &X'_{2a} &&= X, \\
    &X'_{2s - 2a} &&= X'_{2s - 2u} = \set{0, n, 2n}, \\
    &X'_{i} &&= [0,2n] &&\qquad(\text{for $i \in I + I''$}).
\end{alignat*}
\cref{lem:skeleton-int}~(iii) guarantees that these four cases indeed cover for all $i \in I + I$. This completes the description of $X'$; as claimed we have that $\max(X') \leq (4m + 2) n < 2^{24} n$.

\paragraph{Soundness}
Assume that there is a set $A \subseteq U$ with $X = A + A$; our goal is to show that there is some set $A' \subseteq \Int$ such that $X' = A' + A'$. To this end, we pick $A' = \set{i \cdot 4n + x : i \in I, x \in A'_i}$, where the sets $A'_i \subseteq [0,n]$ are defined as follows:
\begin{alignat*}{3}
    &A'_a &&= A, \\
    &A'_u &&= U, \\
    &A'_{s-a} &&= A'_{s-u} = \set{0, n}, \\
    &A'_i &&= [0,n] &&\qquad(\text{for $i \in I''$}).
\end{alignat*}

As the first step, we verify that $A' + A' \subseteq X'$. To this end, it suffices to check that~\makebox{$A'_i + A'_j \subseteq X'_{i+j}$} for all pairs~\makebox{$i, j \in I$}. This is obvious whenever $i + j \in I + I''$ since then~\smash{$X_{i+j}' = [0,2n]$}. By \cref{lem:skeleton-int}~(iii) the only other relevant cases are when $i + j \in \set{2a, 2s - 2a, 2s - 2u, s}$. By \cref{lem:skeleton-int}~(ii) the following cases are exhaustive: 
\begin{alignat*}{2}
    &A'_a + A'_a &&= A + A = X = X'_{2a}, \\
    &A'_{s - a} + A'_{s - a} &&= \set{0, n} + \set{0, n} = \set{0, n, 2n} = X'_{2s - 2a}, \\
    &A'_{s - u} + A'_{s - u} &&= \set{0, n} + \set{0, n} = \set{0, n, 2n} = X'_{2s - 2u}, \\
    &A'_a + A'_{s - a} &&= A + \set{0, n} \subseteq U + \set{0, n} = X'_s, \\
    &A'_u + A'_{s - u} &&= U + \set{0, n} = X'_s.
\end{alignat*}

Next, we verify that $X' \subseteq A' + A'$. To this end, we verify that for each $k \in I + I$ there exists some pair $i, j \in I$ such that $X'_k \subseteq A'_i + A'_j$. First, consider $k \in I + I''$. Observe that $\set{0, n} \subseteq A'_i$ for all $i \in I$. Recall that further $A'_j = \set{0, \dots, n}$ for all $j \in I''$, and hence,
\begin{equation*}
    X'_k \subseteq [0,2n] = \set{0, n} + [0,n].
\end{equation*}
By \cref{lem:skeleton-int}, these are the only remaining cases:
\begin{alignat*}{2}
    &X'_{2a} &&= X = A + A = A'_a + A'_a, \\
    &X'_{2s - 2a} &&= \set{0, n, 2n} = \set{0, n} + \set{0, n} = A'_{s - a} + A'_{s - a}, \\
    &X'_{2s - 2u} &&= \set{0, n, 2n} = \set{0, n} + \set{0, n} = A'_{s - u} + A'_{s - u}, \\
    &X'_s &&= U + \set{0, n} = A'_u + A'_{s-u}.
\end{alignat*}

\paragraph{Completeness}
Assume that there is a set $A' \subseteq \Int$ satisfying that $X' = A' + A'$. We construct a set $A \subseteq U$ with $X = A + A$. The first insight is that since $X' \subseteq \bigcup_{k \in I + I} [k \cdot 4n, k \cdot 4n + 2n]$, we have
\begin{equation*}
    A' \subseteq \bigcup_{0 \leq i \leq m} [i \cdot 2n, i \cdot 2n + n].
\end{equation*}
Importantly, $A'$ cannot contain elements from $[i \cdot 2n, i \cdot 2n + n]$ and $[j \cdot 2n, j \cdot 2n + n]$ at the same time when $i$ and $j$ have different parity. Since the smallest relevant $i$ is even (as otherwise we cannot cover $[k \cdot 4n,  k \cdot 4n + 2n]$ where $k = \min(I + I)$ is even), all relevant indices $i$ must be even. By rescaling all indices by $2$, we can thus express
\begin{equation*}
    A' = \bigcup_{0 \leq i \leq m} (4n \cdot i + A'_i),
\end{equation*}
for some sets $A'_i \subseteq [0,n]$. Moreover, we can rewrite the condition $X' = A' + A'$ as
\begin{equation*}
    X'_k = \bigcup_{\substack{0 \leq i, j \leq m\\i + j = k}} (A'_i + A'_j).
\end{equation*}
Thus, letting $J = \set{i : A'_i \neq \emptyset}$ we must have that $I + I = J + J$. Since the set $I$ is primitive by \cref{lem:skeleton-int}~(i), we conclude that $I = J$. Next, recall that $r_{I + I}(2a) = 1$ by \cref{lem:skeleton-int}~(ii), and thus
\begin{equation*}
    X = X'_{2a} = A'_a + A'_a.
\end{equation*}
\cref{lem:skeleton-int}~(ii) further guarantees that $2s - 2a$ has a unique representation, and so
\begin{alignat*}{2}
    \set{0, n, 2n} &= X'_{2s - 2a} &&= A'_{s - a} + A'_{s - a},
\end{alignat*}
entailing that $A'_{s - a} = \set{0, n}$. We finally use that
\begin{alignat*}{2}
    A'_a + \set{0, n} &= A'_a + A'_{s - a} &&\subseteq X'_s = U + \set{0, n},
\end{alignat*}
which implies that $A'_a \subseteq U$. In summary, the set $A := A'_a$ satisfies both conditions that $X = A + A$ and that $A \subseteq U$.

\paragraph{Running Time}
Let us finally comment on the running time. The set $I$ as provided by \cref{lem:skeleton-int} has constant size, and its construction time can thus be neglected. Given this, the set $X$ can easily be constructed in polynomial time by appropriately shifting and copying $X$ and $U$.
\end{proof}

It remains to prove \cref{lem:skeleton-int}. This involves, in particular, the construction of a primitive set, and for this purpose, the following lemma will turn out useful:

\begin{lemma} \label{lem:make-primitive-int}
For any set $A \subseteq [0,n]$, the set $A \cup \set{4n}$ is primitive.
\end{lemma}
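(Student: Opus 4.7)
The plan is to argue directly from the definition of primitivity: assuming $B + B = A' + A'$ where $A' = A \cup \set{4n}$, I want to show $B = A'$. Note that since we are working over $\Int$ and the only integer with $2x = 0$ is $x = 0$, the ``up to shift by an order-$2$ element'' clause in the definition of primitive is vacuous here. The key structural idea is that the large gap created by $4n$ splits $A' + A'$ into three pieces living in pairwise disjoint intervals, namely
\begin{equation*}
    A + A \subseteq [0, 2n], \qquad A + 4n \subseteq [4n, 5n], \qquad \set{8n},
\end{equation*}
and this separation is robust enough to force $B$ to have the same shape as $A'$.

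First I would dispose of the trivial case $n = 0$ (where $A' \subseteq \set{0}$), and assume $n \geq 1$ so that the three ranges above are genuinely disjoint. Taking maxima, $\max(B + B) = 8n$, which is the unique element of $A' + A'$ above $5n$, so it has a unique representation in $B + B$, giving $\max(B) = 4n$ and $4n \in B$. Next I would use the shift by $4n$ as a probe: for every $b \in B$, the element $b + 4n$ lies in $B + B \subseteq [0, 2n] \cup [4n, 5n] \cup \set{8n}$, and since $b \leq \max(B) = 4n$, this forces $b + 4n \in [4n, 5n] \cup \set{8n}$, hence $b \in [0, n] \cup \set{4n\}$. Therefore I can write $B = B_0 \sqcup \set{4n}$ with $B_0 \subseteq [0, n]$.

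With this structure in hand, the sumset decomposes as $B + B = (B_0 + B_0) \sqcup (B_0 + 4n) \sqcup \set{8n}$, inhabiting exactly the same three disjoint intervals as $A' + A' = (A + A) \sqcup (A + 4n) \sqcup \set{8n}$. Matching the middle piece gives $B_0 + 4n = A + 4n$, hence $B_0 = A$ and $B = A'$, which is what we needed.

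I do not anticipate a significant obstacle: the proof is really a one-idea argument based on interval separation, and the main thing to be careful about is verifying that the three ranges $[0, 2n]$, $[4n, 5n]$, $\set{8n}$ truly are pairwise disjoint for $n \geq 1$ (so that sumset elements can be unambiguously assigned to a piece) and that we cannot have any sneaky element of $B$ outside $[0, n] \cup \set{4n}$ — both are handled by the probe argument above.
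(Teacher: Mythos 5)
Your overall route is the same as the paper's: pin down $4n \in B$, probe every $b \in B$ against $4n$ to force $B = B_0 \sqcup \set{4n}$ with $B_0 \subseteq [0,n]$, and then match the middle piece $B_0 + 4n = A + 4n$. The one difference is that the paper runs a preliminary ``diagonal'' step (using $2b \in B + B$) to first localize $B$ to $[0,n] \cup [2n, \lfloor 5n/2 \rfloor] \cup \set{4n}$, and only then uses the probe to kill the middle block, whereas you jump straight to the probe.

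That shortcut is where a small but real gap appears. You write ``since $b \le \max(B) = 4n$, this forces $b + 4n \in [4n,5n] \cup \set{8n}$,'' but the upper bound $b \le 4n$ only yields $b + 4n \le 8n$; it does nothing to rule out $b + 4n \in [0,2n]$, i.e.\ it does not exclude the possibility that $b$ is a negative integer with $b \le -2n$. What you actually need is the \emph{lower} bound $b \ge 0$, which does not follow from the sentence as written. It is, of course, easy to obtain: $2\min(B) = \min(B+B) = \min\bigl((A \cup \set{4n}) + (A \cup \set{4n})\bigr) = 2\min(A) \ge 0$, so $\min(B) \ge 0$; this is exactly the information that the paper's diagonal step supplies. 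Once you insert that one-line observation, the probe gives $b + 4n \in [4n,8n] \cap \bigl([0,2n] \cup [4n,5n] \cup \set{8n}\bigr) = [4n,5n] \cup \set{8n}$ and hence $b \in [0,n] \cup \set{4n}$, and the rest of your argument goes through and matches the paper.
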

\begin{proof}
Write $A^* = A \cup \set{4 n}$, and suppose that $A^* + A^* = B + B$. We show that $B = A^*$. First, observe that since
\begin{equation*}
    A^* + A^* = (A + A) \cup (A + 4n) \cup \set{8n} \subseteq [0,2n] \cup [4n,5n] \cup \set{8n},
\end{equation*}
we can only have that $B \subseteq [0,n] \cup [2n, \floor{\frac52 n}] \cup \set{4n}$. Clearly, $4n \in B$ as otherwise, we cannot generate $8n \in B + B$. Hence, $B$ cannot contain any element in $[2n, \floor{\frac52 n}]$ as otherwise $B + B$ would contain an element in $[6n, \floor{\frac{13}2 n}]$. In summary, we can write $B = B_0 \cup \set{4n}$ for some set $B_0 \subseteq [0,n]$. Finally, expand $B + B = (B_0 + B_0) \cup (B_0 + 4n) \cup \set{8n}$ and observe that the middle term implies that $B_0 = A$ and thus $B = A^*$.
\end{proof}

\begin{proof}[Proof of \cref{lem:skeleton-int}]
We choose $(j_1, \dots, j_5, a, u, s) := (5^0, 5^1, \dots, 5^7)$; our intention behind this choice is that any multi-subset of $\set{j_1, \dots, j_5, a, u, s}$ with multiplicity at most $4$ sums to a unique value. Let $E = (I' + I') \setminus \set{2a, 2s - 2a, 2s - 2u, s}$. Then, noting that $|I' + I'| = 9$ (as all pairs of elements in the size-4 set $I'$ sum up to unique values  
except for $s$ that can be expressed as $(s-u) + u$ and $(s-a) + a$), it follows that~\smash{$|E| = |I'+I'| - 4 = 9 - 4 = 5$}. By arbitrarily identifying~$E$ and~$[1,5]$ we choose
\begin{equation*}
    I'' = \set{e - j_e, j_e : e \in E} \cup \set{4 \cdot 5^8}.
\end{equation*}
It is easy to verify that $\max(I) = 4 \cdot 5^8 \leq 2^{21}$. In the following, we argue that the properties~(i),~(ii), and~(iii) hold.
\begin{enumerate}[label=(\roman*)]
    \item Note that $I = I' \sqcup I''$ can be written as $A \sqcup \set{4 \cdot 5^8}$ where $A \subseteq [0,5^8]$. By \cref{lem:make-primitive-int}, $I$ is thus indeed primitive.
    \item The goal is to show that $2a$, $2s - 2a$, $2s - 2u$ can be uniquely expressed in $I + I$, and that $s = a + (s - a) = u + (s - u)$ has exactly these two representations (up to exchanging the summands). This can easily be checked using that any multi-subset of $I$ with multiplicity at most $4$ sums to a unique value. (In particular, $I''$ cannot participate.)
    \item This property equivalently states that $E \subseteq I'' + I''$. And indeed, each $e \in E$ can be expressed as $e = (e - j_e) + j_e \in I''$. \qedhere
\end{enumerate}
\end{proof}

We remark that we made no attempt to optimize the constants in \cref{lem:skeleton-int}, but rather aimed for a proof that is as simple as possible.

\subsection{Positioning for \texorpdfstring{$\Field_p^d$}{Fpd}} \label{sec:pos:sec:Fp}
In this section, we will rework the entire positioning proof and present its adaption to $\Field_p^d$. We encourage the reader to skip this section at first reading and to continue reading the overall reduction in \cref{sec:masking,sec:sat}.

The additional difficulty for $\Field_p^d$ is mostly due to the cases $\Field_2^d$ and $\Field_3^d$. In both cases, it becomes somewhat more involved to construct primitive sets (see the following two lemmas). However, the more bothersome difficulty is that over $\Field_2^d$ we cannot easily require that $X = A + A$ with the same proof strategy. Specifically, in the language of \cref{lem:pos-int} the issue is that if we want to place~$A$ at some position $A'_a$, then $A + A$ is placed at~$A'_0$. However, all sorts of garbage terms are also placed at~$A'_0$, making the constraint void. We deal with this issue by designing a weaker positioning reduction involving two sets $A$ and $B$; see \cref{lem:pos-Fp}.

\begin{lemma} \label{lem:make-irreducible}
Let $A \subseteq \Field_p^d$ with $|A| \geq 2$, and let $v \perp A$. Then $A \cup \set{v}$ is irreducible.
\end{lemma}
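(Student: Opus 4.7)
My plan is to argue by contradiction: assume $A \cup \set{v} = B + C$ with $|B|, |C| \geq 2$, and push this identity through a linear functional so that Cauchy-Davenport generates an impossibility. Concretely, take $\phi : \Field_p^d \to \Field_p$ defined by $\phi(x) = v \cdot x$. By the hypothesis $v \perp A$ we have $\phi(A) = \set{0}$, hence
\begin{equation*}
    \phi(A \cup \set{v}) \;\subseteq\; \set{0,\, v \cdot v},
\end{equation*}
a set of cardinality at most two. Since $\phi$ is linear, $\phi(B) + \phi(C) = \phi(B+C)$ equals this image, so the sumset $\phi(B) + \phi(C)$ in $\Field_p$ also has size at most two.

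Next, I would apply \cref{lem:cauchy-davenport} to $\phi(B), \phi(C) \subseteq \Field_p$: for $p \geq 3$ the bound $|\phi(B) + \phi(C)| \leq 2$ forces $|\phi(B)| + |\phi(C)| \leq 3$, so at least one of $\phi(B), \phi(C)$ is a singleton; assume WLOG that $\phi(B) = \set{\alpha}$. Then for each fixed $c \in C$ the set $\set{b + c : b \in B}$ consists of $|B| \geq 2$ distinct elements all sharing the $\phi$-value $\alpha + \phi(c)$, so they must lie in a single level set of $\phi$ within $A \cup \set{v}$. If $v \cdot v \neq 0$ and that level happens to be $v \cdot v$, the level set is $\set{v}$, directly contradicting $|B| \geq 2$. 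Hence $\alpha + \phi(c) = 0$ for every $c \in C$, so $\phi(C) = \set{-\alpha}$ and $\phi(B+C) = \set{0}$; but then $v \in B+C$ forces $\phi(v) = v \cdot v = 0$, closing out the non-isotropic case $v \cdot v \neq 0$.

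The remaining delicate case is the isotropic one, $v \cdot v = 0$, where $\phi$ no longer separates $v$ from $A$ and the argument above stalls. To close this, I would replace $\phi$ by a second linear functional $\psi(x) = w \cdot x$ chosen with $w \in A^\perp$ but $w \cdot v \neq 0$ --- such a $w$ exists as long as $v \notin \mathrm{span}(A)$ --- and rerun the argument verbatim with $\psi$ in place of $\phi$. I expect this isotropic case (together with the small-prime case $p = 2$, where Cauchy-Davenport becomes vacuous and one must argue directly on the level-set structure of $\phi$) to be the main obstacle; the first two paragraphs are essentially automatic once one commits to the dot-product reduction, while the genuine work sits in extracting enough non-degeneracy from $v \perp A$ and $|A| \geq 2$ to pick a workable $\psi$.
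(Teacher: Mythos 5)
Your projection along $\phi(x) = v \cdot x$ is the same basic move as the paper's fiber decomposition $B_i = \set{x \in v^\perp : x + iv \in B}$, but you collapse each fiber to its image $\phi(B)$ before invoking Cauchy-Davenport, and that is exactly where a genuine gap opens up at $p = 2$. Over $\Field_2$, Cauchy-Davenport does \emph{not} force $\phi(B)$ or $\phi(C)$ to be a singleton: $\phi(B) = \phi(C) = \set{0,1}$ gives $|\phi(B) + \phi(C)| = 2$ with no contradiction, and then your argument --- which needs a constant $\phi$-value $\alpha$ on one side to pin down the level set of $B + c$ for each $c$ --- simply does not run. The content you defer to ``argue directly on the level-set structure'' is precisely the work the paper does: it first proves $|B_i|, |C_i| \leq 1$ for every fiber (a step your proof never establishes) and only then applies Cauchy-Davenport, treating $p = 2$ by noting that $B_0 + C_1$ and $B_1 + C_0$ must both be singletons mapping to $v$, which forces $A = (B_0 + C_0) \cup (B_1 + C_1)$ to collapse to a single element, contradicting $|A| \geq 2$. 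Since the lemma feeds into \cref{thm:np-hard-finite-field} with $p = 2$ as a headline case, this is not a corner you can leave to the reader.

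Your second worry, the isotropic case $v \cdot v = 0$, is correctly sensed but your proposed repair fails for a different reason than you indicate: a functional $\psi$ with $\psi|_A \equiv 0$ and $\psi(v) \neq 0$ exists precisely when $v \notin \mathrm{span}(A)$, and that does \emph{not} follow from $v \perp A$. In fact the lemma as literally stated is false for isotropic $v$: take $A$ a two-dimensional subspace contained in its own orthogonal complement and $v \in A$; then $v \perp A$, $|A| \geq 2$, yet $A \cup \set{v} = A = A + A$ is reducible. The statement implicitly carries the hypothesis $v \cdot v \neq 0$, which the paper's proof also uses tacitly (the decomposition $b = b' + iv$ with $b' \in v^\perp$ needs $v^\perp$ and $\mathrm{span}(v)$ to be complementary) and which holds in the application since the pairwise-orthogonal vectors in \cref{lem:skeleton-Fp} can be taken non-isotropic. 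So the ``genuine work'' is not where you predicted; it is the $p = 2$ case, and your sketch leaves it undone.
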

\begin{proof}
Suppose that $A \cup \set{v} = B + C$; we show that $|B| \leq 1$ or $|C| \leq 1$. Let $V \subseteq \Field_p^d$ denote the subspace orthogonal to $v$, and  for $i \in \Field_p$ let
\begin{align*}
    B_i &= \set{x \in V : x + i v \in B}, \\
    C_i &= \set{x \in V : x + i v \in C}.
\end{align*}
Note that $A \cup \set{v} = B + C$ can be rewritten in terms of the following three conditions:
\begin{align*}
    A &= \bigcup_{\substack{i, j \in \Field_p\\i + j = 0}} (B_i + C_j), \\
    \set{0} &= \bigcup_{\substack{i, j \in \Field_p\\i + j = 1}} (B_i + C_j), \\
    \emptyset &= \bigcup_{\substack{i, j \in \Field_p\\i + j \neq 0, 1}} (B_i + C_j).
\end{align*}
Note further that $|B| = \sum_i |B_i|$ and that $|C| = \sum_i |C_i|$.

We first establish that $|B_i|, |C_i| \leq 1$. Suppose for contradiction that $|B_i| \geq 2$. Then by the second and third conditions, we have $|C_{j}| = 0$ for all $j \neq -i$. If $|C_{-i}| \leq 1$ then also $|C| \leq 1$ and we are done. Instead, assume that $|C_{-i}| \geq 2$. Then by the symmetric argument, we have that $|B_j| = 0$ for all $j \neq i$. But this violates the second condition as $\bigcup_{i + j = 1} (B_i + C_j) = \emptyset$.

To complete the proof, let $I = \set{i : B_i \neq \emptyset}$ and $J = \set{i : C_i \neq \emptyset}$. In order to satisfy the three previous conditions, we must have that $I + J = \set{0, 1}$. The Cauchy-Davenport theorem implies that either $|I| \leq 1$ or $|J| \leq 1$ or $p = 2$. In the former two cases we are done, as then~\smash{$|B| = \sum_{i \in I} |B_i| \leq 1$} or~\smash{$|C| = \sum_{i \in J} |C_j| \leq 1$}. In the latter case, if $p = 2$, there are elements $x, y$ such that $B_0, C_1 \subseteq \set{x}$ and $B_1, C_0 \subseteq \set{y}$ in order to satisfy the second condition. But this implies that $A \subseteq \set{x + y}$, contradicting the assumption that $|A| \geq 2$.
\end{proof}

\begin{lemma} \label{lem:make-primitive}
Let $A \subseteq \Field_p^d$ be irreducible with $|A| \geq 2$ and let $v \perp A$. Then $A \cup \set{v}$ is primitive.
\end{lemma}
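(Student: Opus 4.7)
The plan is to adapt the coset-decomposition strategy from \cref{lem:make-irreducible}. Suppose $(A \cup \set{v}) + (A \cup \set{v}) = B + B$. The left-hand side equals $(A+A) \cup (A+v) \cup \set{2v}$ and is supported in the three cosets $V$, $V+v$, $V+2v$ of $V := v^\perp$ (recall $A \subseteq V$). Writing $B = \bigsqcup_{t \in \Field_p}(B_t + tv)$ with $B_t \subseteq V$ and $T = \set{t : B_t \neq \emptyset}$, matching cosets forces $T + T = \set{0, 1, 2} \pmod{p}$, because each of the three cosets contains nonempty sumset elements.

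For $p \geq 5$, Cauchy-Davenport (\cref{lem:cauchy-davenport}) gives $|T| \leq 2$; writing $T = \set{t_1, t_2}$ and solving $\set{2t_1, t_1 + t_2, 2t_2} = \set{0, 1, 2}$ using invertibility of $2$ mod $p$ pins down $T = \set{0, 1}$. The three coset constraints then cascade: $B_1 + B_1 = \set{0}$ forces $B_1 = \set{0}$, then $B_0 + B_1 = A$ forces $B_0 = A$, and $B_0 + B_0 = A + A$ holds automatically, so $B = A \cup \set{v}$. For $p = 3$, Cauchy-Davenport is vacuous, so I would enumerate $T \in \set{\set{0,1}, \set{0,2}, \set{1,2}, \set{0,1,2}}$. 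The cases $T \in \set{\set{0,2}, \set{0,1,2}}$ force $B_0, B_2$ to be singletons from the coset-$2$ constraint, collapsing $A$ or $A + A$ to a single element and contradicting $|A| \geq 2$. The delicate case $T = \set{1, 2}$ forces $B_1 = \set{0}$, then $B_2 = A + A$ from coset $0$, and finally $B_2 + B_2 = A$ yields $4A = A$; this implies $|2A| = |A|$, so $A$ is a coset $x + H$ of a subgroup with $|H| \geq 2$. But then $A = \set{x, x+h} + H$ contradicts the irreducibility of $A$.

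The case $p = 2$ is structurally different: since $2v = 0$, the sumset lies in only two cosets $V$ and $V + v$, and $T = \set{0, 1}$ is the only possibility. The key equation becomes $B_0 + B_1 = A$, and the irreducibility of $A$ (with $|A| \geq 2$) directly forces $|B_0| = 1$ or $|B_1| = 1$; either option yields $B = (A \cup \set{v}) + x$ for some $x \in V$ or $x \in V + v$. Since every element of $\Field_2^d$ satisfies $2x = 0$, this is exactly what primitivity demands.

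I expect the $p = 3$ case---specifically the subcase $T = \set{1, 2}$---to be the main obstacle, since Cauchy-Davenport provides no useful information there and the irreducibility of $A$ must be invoked through the somewhat delicate $4A = A$ argument, together with the observation that $|4A|=|A|$ implies $|2A|=|A|$ and hence forces $A$ to be a coset of a nontrivial subgroup.
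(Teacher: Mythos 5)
Your proof is correct and follows the paper's route closely: decompose $B$ along cosets of $v^\perp$, match the support of $B+B$ against the cosets that $(A \cup \set{v}) + (A \cup \set{v})$ occupies, and case on $p$ with Cauchy--Davenport handling $p \geq 5$. The only overcomplication is the $p = 3$, $T = \set{1,2}$ subcase, which you flag as the main obstacle: from $B_2 + B_2 = A$ alone, irreducibility finishes immediately (if $|B_2| \geq 2$ this directly violates irreducibility, and if $|B_2| \leq 1$ then $|A| \leq 1$), so the detour through $4A = A$, $|2A| = |A|$, and the coset structure of $A$ is unnecessary.
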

\begin{proof}
Write $A^* = A \cup \set{v}$ and suppose that $B + B = A^* + A^* = (A + A) \cup (v + A) \cup \set{2v}$; we show that $B = A^* + t$ for some shift $t$ with $2t = 0$. Let $V$ denote the $(d-1)$-dimensional subspace orthogonal to $v$, and for $i \in \Field_p$ write $B_i = \set{x \in V : x + i v \in B}$. 
Using this notation, observe that the assumption $B + B = A^* + A^*$ can be rewritten as
\begin{alignat*}{2}
    A + A &= \bigcup_{\substack{i, j \in \Field_p\\i + j = 0}} (B_i + B_j), \\
    A &= \bigcup_{\substack{i, j \in \Field_p\\i + j = 1}} (B_i + B_j), \\
    \set{0} &= \bigcup_{\substack{i, j \in \Field_p\\i + j = 2}} (B_i + B_j) &&\qquad(\text{only relevant if $p > 2$}), \\
    \emptyset &= \bigcup_{\substack{i, j \in \Field_p\\i + j \neq 0, 1, 2}} (B_i + B_j) &&\qquad(\text{only relevant if $p > 3$}).
\end{alignat*}
We distinguish three cases for $p$:

\begin{itemize}
    \item $p = 2$: In this case the second identity states that $A = B_0 + B_1$. By the irreducibility of $A$, there is some shift $s$ such that either $B_0 = A + s$ and $B_1 = \set{s}$, or~\makebox{$B_0 = \set{0}$} and $B_1 = A + s$. Let $t = s$ in the former case and $t = s + v$ in the latter case. Then, since $B = B_0 \cup (v + B_1)$, we indeed have $B = A^* + t$.
    \item $p = 3$: The third condition implies that $B_1 + B_1 \subseteq \set{0}$ and hence $B_1 \subseteq \set{0}$. It also implies that $B_0 + B_2 \subseteq \set{0}$, which can only be satisfied in the following three cases:
    \begin{itemize}
        \item $|B_0| = |B_2| = 1$: In this case the first identity implies that $|A + A| \leq 2$. But recall that $|A| \geq 2$, and thus $|A| = |A + A| = 2$. This can only happen when $A$ is a $1$-dimensional subspace of $\Field_2^d$. But then $A = A + A$, which contradicts the irreducibility of $A$.
        \item $|B_0| = 0$: Then the second identity implies that $A = B_2 + B_2$. This again contradicts the irreducibility of $A$.
        \item $|B_2| = 0$: Then the second inequality implies that $B_0 = A$ and that $B_1 = \set{0}$. Putting these together, we find that $B = A^*$ as claimed.
    \end{itemize}
    \item $p > 3$: Let $I = \set{i : B_i \neq \emptyset}$ and note that $I + I = \set{0, 1, 2}$. The Cauchy-Davenport theorem implies that $|I| \leq 2$---in fact, the only feasible solution is $I = \set{0, 1}$.\footnote{To see this, first observe that $I$ can only consist of $0$, $1$ and $2^{-1}$. Setting $I = \set{0, 2^{-1}}$ fails as we need that $0 + 2^{-1} = 2$ (which is only satisfied for $p = 3$). Similarly, setting $I = \set{1, 2^{-1}}$ fails as we require that $1 + 2^{-1} = 0$ (which again is only satisfied for $p = 3$).} Therefore, and as~\makebox{$B_1 \subseteq \set{0}$} by the third identity, the second identity implies that $B_0 = A$ and $B_1 = \set{0}$. This shows that $B = A^*$ as claimed. \qedhere
\end{itemize}
\end{proof}

Putting these two lemmas together, we can make an arbitrary set $A$ primitive by including two vectors $v_1, v_2$ satisfying the orthogonality constraints. Conversely, we remark that there are indeed examples where \emph{two} orthogonal vectors $v_1, v_2$ are necessary to make $A$ primitive. For instance, take a $(d-1)$-dimensional subspace $V \subseteq \Field_3^d$ orthogonal to some nonzero vector $v$. Then the set $A \cup {v}$ is not primitive as $(A \cup \set{v}) + (A \cup \set{v}) = B + B$ where $B = (A - v) \cup \set{v}$.

\begin{lemma}[Skeleton Set for $\Field_p^d$] \label{lem:skeleton-Fp}
For any prime $p$, there are distinct elements $a, b, u, v, s, t \in \Field_p^{40}$ and a set $I'' \subseteq \Field_p^{40}$ satisfying the following conditions for $I' = \set{a, b, u, v, s - a, s - u, t - b, t - v}$:
\begin{enumerate}[label=(\roman*)]
    \item $I = I' \sqcup I''$ is primitive.
    \item $r_{I + I}(a + b) = r_{I + I}(s - a + t - u) = 2$ and $r_{I + I}(s) = r_{I + I}(t) = 4$.
    \item $(I' + I') \setminus \set{a + b, s - a + t - u, s, t} \subseteq I'' + I''$.
\end{enumerate}
\end{lemma}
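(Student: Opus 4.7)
The plan is to follow the integer construction in \cref{lem:skeleton-int}, replacing the ``powers of $5$'' trick for generic independence with honest linear independence in $\Field_p^{40}$. First I would pick $a, b, u, v, s, t$ as six vectors lying in a small subspace $W \subseteq \Field_p^{40}$, chosen so that the only nontrivial linear identities among short sums from $I' = \set{a, b, u, v, s-a, s-u, t-b, t-v}$ are the forced coincidences $a + (s-a) = u + (s-u) = s$ and $b + (t-b) = v + (t-v) = t$. For odd $p$ six linearly independent vectors suffice; for $p \in \set{2, 3}$ a slightly enlarged working subspace is needed to prevent order-$2$ or order-$3$ identities from collapsing distinct sums. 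This already yields the $I' + I'$ representation counts $r_{I'+I'}(a+b) = 2$ (from swapping the summands) and $r_{I'+I'}(s) = r_{I'+I'}(t) = 4$ (from the two pairs plus their two swaps).

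Next I would construct $I''$ so as to simultaneously achieve (iii) and the remaining count $r_{I+I}(s-a+t-u) = 2$. For every element $e$ in the finite set $\bigl((I' + I') \setminus \set{a+b, s, t}\bigr) \cup \set{s-a+t-u}$, pick a fresh vector $j_e$ living in its own new coordinate of $\Field_p^{40}$, disjoint from $W$ and from all other $j_{e'}$, and add both $j_e$ and $e - j_e$ to $I''$. This writes $e = j_e + (e - j_e) \in I'' + I''$, covering (iii). At the same time, freshness of every $j_e$ guarantees that no cross sum from $I' + I''$ or from $I'' + I''$ involving two different cover pairs can spuriously land on any of $a + b$, $s - a + t - u$, $s$, or $t$, because any such collision would have to cancel a unique fresh coordinate direction that is present on one side and absent on the other. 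Consequently the representation counts from $I' + I'$ persist in $I + I$ (contributing $2, 0, 4, 4$ respectively), while the single cover pair added for $s - a + t - u$ contributes its own $r = 2$; summing up gives exactly the values claimed in (ii).

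Third, for primitivity (i) I would invoke \cref{lem:make-irreducible} followed by \cref{lem:make-primitive}, adjoining to $I''$ two further fresh vectors $w_1, w_2$, each in its own new coordinate of $\Field_p^{40}$ and therefore orthogonal to everything constructed so far; the first promotes the working set to irreducibility and the second lifts it to primitivity. A crude dimension count — $6$ coordinates for $W$, at most $|I' + I'| + 1 \le 36$ coordinates for the $j_e$'s, and $2$ more for $w_1$ and $w_2$ — fits comfortably within $40$.

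The hardest part will be verifying (ii) in characteristic $2$ and $3$, where the identities $p \cdot x = 0$ can collapse generic sums and throw the representation counts off. My strategy is to segregate $W$, the $j_e$'s, and $\set{w_1, w_2}$ into pairwise disjoint coordinate blocks, so that the three-way case analysis inside the proof of \cref{lem:make-primitive} applies to our $I$ unchanged, and so that none of the four target values in (ii) can coincide with any spurious sum involving fresh coordinates. A small bookkeeping check that each pair $\set{j_e, e - j_e}$ is a genuinely two-element set — automatic once $j_e$ lies in a coordinate block in which $e$ vanishes — closes the argument.
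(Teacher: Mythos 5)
Your approach is essentially the paper's: realize $a, b, u, v, s, t$ as pairwise orthogonal coordinate directions, add cover pairs $\set{j_e, e - j_e}$ with each $j_e$ in a fresh coordinate, and append two more orthogonal vectors to feed into \cref{lem:make-irreducible} and \cref{lem:make-primitive}. The construction idea matches, down to the choice of $40$ dimensions.

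The gap is in the treatment of the distinguished element written as $s - a + t - u$. This is almost certainly a typo in the lemma statement for $s - a + t - b$: the paper's own verification of (ii) checks ``$(s-a)+(t-b)$ is unique,'' and \cref{lem:pos-Fp} uses $X'_{s-a+t-b}$, not $X'_{s-a+t-u}$. You correctly observed that $s - a + t - u$ is \emph{not} in $I' + I'$ under generic independence, and you compensated by adding a cover pair for it; read against the literal (typo'd) statement your $r_{I+I}(s-a+t-u) = 2$ is fine. But your cover set $(I'+I') \setminus \set{a+b, s, t}$ still contains $s - a + t - b = (s-a)+(t-b)$, so you also add a cover pair for it, which would push $r_{I+I}(s-a+t-b)$ up to $4$ — $2$ from $I' + I'$ plus $2$ from the cover pair. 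Once the typo is fixed, that breaks condition (ii), and downstream it breaks the completeness argument of \cref{lem:pos-Fp}, which needs $(s-a)+(t-b)$ to be uniquely representable. The fix is exactly what the paper does: take $E = (I'+I') \setminus \set{a+b,\, s-a+t-b,\, s,\, t}$, i.e.\ do \emph{not} cover $s-a+t-b$; it already has precisely the two ordered representations $(s-a, t-b)$ and $(t-b, s-a)$. This also repairs your dimension count: $|E| \le \binom{9}{2} - 4 = 32$, so $6 + 32 + 2 = 40$ fits exactly, whereas covering the extra element overshoots $\Field_p^{40}$.

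One small remark: your worry that $p \in \set{2,3}$ requires an enlarged working subspace for $a, b, u, v, s, t$ is unnecessary. Taking them to be pairwise orthogonal nonzero coordinate directions (as the paper does) already rules out the collisions in $I' + I'$ that concern (ii), in any characteristic; the $p = 2, 3$ casework lives entirely inside \cref{lem:make-primitive} and does not need to be re-litigated here.
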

Put in words, the second condition guarantees that $a + b$ and $(s - a) + (t - u)$ can be uniquely represented in the sumset $I + I$ (up to exchanging the summands), and that $s$ and $t$ have exactly two representations (namely, $s = a + (s - a)$ and $s = u + (s - u)$, and similarly for $t$).

\begin{proof}
Let $a, b, u, v, s, t, j_1, \dots, j_{32}, v_1, v_2 \in \Field_p^{40}$ denote pairwise orthogonal nonzero vectors, and write $E = (I' + I') \setminus \set{a + b, s - a + t - u, s, t}$; note that~\smash{$|E| \leq \binom{|I'|+1}{2} - 4 = \binom{9}{2} - 4 = 32$}. By arbitrarily identifying $E$ with $[1,32]$, we let
\begin{equation*}
    I'' = \set{e + j_e, -j_e : e \in E} \sqcup \set{v_1, v_2}.
\end{equation*}
In the following, we argue that the properties (i), (ii) and (iii) hold.

\begin{enumerate}[label=(\roman*)]
    \item Note that we can write $I = A \sqcup \set{v_1, v_2}$ where clearly $|A| \geq 2$ and $A \perp v_1, v_2$ and $v_1 \perp v_2$. Thus, by \cref{lem:make-irreducible} the set $A \sqcup \set{v_1}$ is irreducible, and by \cref{lem:make-primitive} the set $A \sqcup \set{v_1, v_2}$ is primitive.
    \item Condition (ii) enforces specific multiplicities for exactly the elements in $x \in (I' + I') \setminus E$. For any such element $x$, it can easily be verified that $x \not\in I'' + I'$ (since all elements in $I''$ have a nontrivial component $j_e$, but $j_e \perp I'$) and that $x \not\in I'' + I''$ (by construction we can only express the elements in $E$ in this way). It follows that $r_{I + I}(x) = r_{I' + I'}(x)$. Having this in mind, we can verify condition (ii) by hand: The sums $a + b$ and $(s - a) + (t - b)$ are unique, $s$ appears exactly two times as $s = a + (s - a) = u + (s - u)$ and similarly $t$ appears exactly two times as $t = b + (t - b) = v + (t - v)$.
    \item This condition is trivial by construction: We can express $e = e + j_e - j_e \in I'' + I''$ for all elements $e \in E$. \qedhere
\end{enumerate}
\end{proof}

\begin{lemma}[Positioning for $\Field_p^d$] \label{lem:pos-Fp}
Let $p$ be a prime and let $X, U, V \subseteq \Field_p^d$. In time $\poly(p^d)$ we can construct a set~\smash{$X' \subseteq \Field_p^{d'}$}, where $d' = d + 40$, such that:
\begin{equation*}
    \exists A \subseteq U, B \subseteq V : X = A + B \qquad\text{if and only if}\qquad \exists A' \subseteq \Field_p^{d'} : X' = A' + A'.
\end{equation*}
\end{lemma}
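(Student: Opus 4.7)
My plan is to mirror the proof of \cref{lem:pos-int}, using the skeleton set supplied by \cref{lem:skeleton-Fp} in place of the integer one. The key structural difference, dictated by the $\Field_2^d$ complications noted in the overview, is that we work in the two-sets regime $X = A + B$ rather than $X = A + A$. Writing every element of $\Field_p^{d+40}$ as $(x, y)$ with $x \in \Field_p^d$ and $y \in \Field_p^{40}$, I will treat $X'$ as a union of ``slices'' $X'_y = \set{x : (x, y) \in X'}$ indexed by $y \in \Field_p^{40}$. Given the distinguished elements $a, b, u, v, s, t \in \Field_p^{40}$ and the sets $I', I''$ from \cref{lem:skeleton-Fp}, I will set
\begin{equation*}
    X'_{a+b} = X, \quad X'_{(s-a)+(t-b)} = \set{0}, \quad X'_s = U, \quad X'_t = V,
\end{equation*}
and $X'_k = \Field_p^d$ for every remaining $k \in I + I$. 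Condition~(iii) of \cref{lem:skeleton-Fp} guarantees that each such remaining $k$ lies in $I + I''$, which will make saturating these slices trivial in the next step.

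For soundness, given $A \subseteq U$, $B \subseteq V$ with $A + B = X$, I will assemble $A'$ by placing $A, B, U, V$ at positions $a, b, u, v$, the singleton $\set{0}$ at each of $s-a, s-u, t-b, t-v$, and the full space $\Field_p^d$ at every position in $I''$. Verifying $A' + A' = X'$ then amounts to a slice-by-slice check: the four special positions are controlled by the multiplicity data in condition~(ii) (for example, at $s$ the two unordered pairs $\set{a, s-a}$ and $\set{u, s-u}$ contribute $A \cup U = U$ because $A \subseteq U$), while every other slice equals $\Field_p^d$ because at least one summand in the relevant decomposition comes from $I''$ and hence equals $\Field_p^d$.

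For completeness, starting from $A''$ with $A'' + A'' = X'$, I will first use primitivity (condition~(i)) to normalize the support of $A''$ in the last 40 coordinates: since that support $J$ satisfies $J + J = I + I$, primitivity gives $J = I + \tau$ with $2\tau = 0$, and shifting $A''$ by $(0, -\tau)$ preserves the sumset and reduces to $J = I$. Condition~(ii) then pins down $A''_a + A''_b = X$ from the unique representation at $a+b$, and it forces $A''_{s-a}, A''_{t-b}$ to be singletons $\set{x_1}, \set{y_1}$ with $x_1 + y_1 = 0$ from the $\set{0}$-slice at $(s-a)+(t-b)$. Matching the slice at $s$ then yields $(A''_a + x_1) \cup (A''_u + A''_{s-u}) = U$, so $A := A''_a + x_1 \subseteq U$; symmetrically $B := A''_b + y_1 \subseteq V$ from position $t$. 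Finally $A + B = A''_a + A''_b + (x_1 + y_1) = X + 0 = X$, as required.

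The main obstacle I anticipate lies in the $p = 2$ case: every element has order $2$, so primitivity only constrains $A''$ up to an arbitrary translation in the last 40 coordinates, and even after normalization the singleton slices $A''_{s-a}, A''_{t-b}$ need not equal $\set{0}$. The slice $X'_{(s-a)+(t-b)} = \set{0}$ is precisely the calibration device that forces $x_1 + y_1 = 0$, ensuring the two extraneous shifts cancel when we form $A + B$; without it the extracted pair would differ from $X$ by an uncontrolled translation. The running time is $\poly(p^d)$ because $|I + I|$ is an absolute constant (the skeleton lives in a fixed number of dimensions) and each slice has at most $p^d$ elements.
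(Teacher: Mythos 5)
Your proposal follows essentially the same route as the paper: you build $X'$ from the skeleton set of \cref{lem:skeleton-Fp} with exactly the same assignment of slices ($X$ at $a+b$, $\set{0}$ at $(s-a)+(t-b)$, $U$ at $s$, $V$ at $t$, $\Field_p^d$ elsewhere), you assemble $A'$ in the soundness direction in the same way, and in the completeness direction you use primitivity to normalize the support, then the $\set{0}$-slice to extract the calibration singletons $\set{x_1},\set{y_1}$ with $x_1+y_1=0$, exactly as the paper does. Your explicit remark that the $\set{0}$-slice is what compensates for the order-$2$ ambiguity over $\Field_2^d$ is a correct reading of the paper's design, and the argument is sound.
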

\begin{proof}
We start with the construction of $X'$. Let $a, b, u, v, s, t, I, I', I''$ be as in the previous lemma. We view~\smash{$\Field_p^{d'} = \Field_p^{40} \times \Field_p^d$} and set $X' = \set{(i, x) : i \in I + I, x \in X'_i}$, where the sets $X'_i \subseteq \Field_p^d$ are defined as follows:
\begin{alignat*}{3}
    &X'_{a + b} &&= X, \\
    &X'_{s} &&= U, \\
    &X'_{t} &&= V, \\
    &X'_{s - a + t - b} &&= \set{0}, \\
    &X'_{i} &&= \smash{\Field_p^d} &&\qquad(\text{for $i \in I + I''$}).
\end{alignat*}
\cref{lem:skeleton-Fp}~(iii) guarantees that these four cases indeed cover for all $i \in I + I$.

\paragraph{Soundness}
Assume that there are sets $A \subseteq U$ and $B \subseteq V$ with $X = A + B$. We show that there is some set $A' \subseteq \Field_p^{d'}$ such that $X' = A' + A'$. We choose $A' = \set{(i, x) : i \in I, x \in A'_i}$, where the sets~$A'_i$ are defined as follows:
\begin{alignat*}{3}
    &A'_a &&= A, \\
    &A'_b &&= B, \\
    &A'_u &&= U, \\
    &A'_v &&= V, \\
    &A'_{s - a} &&= A'_{s - u} = A'_{t - b} = A'_{t - v} = \set{0}, \\
    &A'_i &&= \smash{\Field_p^d} &&\qquad(\text{for $i \in I''$}).
\end{alignat*}

We first verify that $A' + A' \subseteq X'$. To this end, we check that $A'_i + A'_j \subseteq X'_{i+j}$ for all pairs~\makebox{$i, j \in I$}. This is clearly true whenever $i + j \in I + I''$ since~\smash{$X_{i+j}' = \Field_p^d$}. By \cref{lem:skeleton-Fp}~(iii), the only other relevant cases are when $i + j \in \set{a + b, s - a + t - b, s, t}$. By \cref{lem:skeleton-Fp}~(ii) the following cases are exhaustive: 
\begin{alignat*}{3}
    &A'_a + A'_b &&= A + B &&= X = X'_{a + b}, \\
    &A'_{s - a} + A'_{t - b} &&= \set{0} + \set{0} &&= \set{0} = X'_{s - a + t - b}, \\
    &A'_a + A'_{s - a} &&= A + \set{0} &&= A \subseteq U = X'_s, \\
    &A'_u + A'_{s - u} &&= U + \set{0} &&= U = X'_s, \\
    &A'_b + A'_{t - b} &&= B + \set{0} &&= B \subseteq V = X'_t, \\
    &A'_v + A'_{t - v} &&= V + \set{0} &&= V = X'_t.
\end{alignat*}

Next, we verify that $X' \subseteq A' + A'$. To this end, we verify that for each $k \in I + I$ there exists some pair $i, j \in I$ such that $X'_k \subseteq A'_i + A'_j$. For each $k \in I + I''$ this is clear using that~\smash{$A'_i = \Field_p^d$} for all $i \in I''$. By \cref{lem:skeleton-Fp}, these are the remaining cases:
\begin{alignat*}{4}
    &X'_{a + b} &&= X &&= A + B &&= A'_a + A'_b, \\
    &X'_s &&= U &&= U + \set{0} &&= A'_u + A'_{s - u}, \\
    &X'_t &&= V &&= V + \set{0} &&= A'_v + A'_{t - v}, \\
    &X'_{s - a + t - b} &&= \set{0} &&= \set{0} + \set{0} &&= A'_{s - a} + A'_{t - b}.
\end{alignat*}

\paragraph{Completeness}
Next, assume that there is a set $A' \subseteq \Field_p^{d'}$ such that $X' = A' + A'$. Our goal is to construct sets $A \subseteq U$ and $B \subseteq V$ with $X = A + B$. For~\smash{$i \in \Field_p^{40}$}, let us write~\smash{$A'_i = \set{x \in \Field_p^d : (i, x) \in A'}$}, and let $J = \set{i \in \Field_p^{40} : A'_i \neq \emptyset}$. As we assume that $X' = A' + A'$, we must have that $I + I = J + J$. Since $I$ is primitive by \cref{lem:skeleton-Fp}~(i), we conclude that $J = I + s$ for some shift $s$ with $2s = 0$; without loss of generality we assume that $s = 0$. Since $a + b$ has a unique representation in $I + I$ by \cref{lem:skeleton-Fp}~(ii), we have that
\begin{equation*}
    X = X'_{a + b} = A'_a + A'_b.
\end{equation*}
Moreover, by \cref{lem:skeleton-Fp}~(ii), $(s - a) + (t - b)$ also has a unique representation, and therefore
\begin{alignat*}{2}
    \set{0} &= X'_{s - a + t - b} &&= A'_{s - a} + A'_{t - b}.
\end{alignat*}
It follows that for some element $x \in \Field_p^d$ we have $A'_{s - a} = \set{x}$ and $A'_{t - b} = \set{-x}$. We finally have that
\begin{alignat*}{2}
    A'_a + x &= A'_a + A'_{s - a} &&\subseteq X'_s = U, \\
    A'_b - x &= A'_b + A'_{t - b} &&\subseteq X'_t = V.
\end{alignat*}
In summary, the sets $A := A'_a + x$ and $B := A'_b - x$ satisfy the three conditions that $X = A + B$ and $A \subseteq U$ and $B \subseteq V$.
\end{proof}

\section{Masking} \label{sec:masking}
In this section, we implement the ``masking'' step, which transforms the problem of testing the exact sumset condition $X = A + A$ to a more relaxed version testing whether $X \subseteq A + A \subseteq Y$. Specifically, for the integers, we prove the following reduction:

\lemmaskingint*

In \cref{sec:masking:sec:int}, we provide a proof of \cref{lem:masking-int}, and in the following \cref{sec:masking:sec:Fp}, we design the appropriate analogue for finite field vector spaces.

\subsection{Masking for \texorpdfstring{$\Int$}{Z}} \label{sec:masking:sec:int}
Unfortunately, the proof of \cref{lem:masking-int} turns out to be quite technical. While in the positioning step working with integers had several advantages, in the masking step, we have to deal with several drawbacks. One particular drawback is as follows: In order to ``mask'' a set $A$, we want to find another set $M$ such that $A + M$ is predictable without knowing $A$. Over~\smash{$\Field_p^d$} this easily works by choosing~\smash{$M = \Field_p^d$}, in which case~\smash{$A + M = \Field_p^d$} without having any knowledge about $A$ (other than the trivial condition that $A$ be nonempty). Over the integers, the natural analogue is to pick an interval $M = [0,N]$. However, then $A + M = [\min(A), \max(A) + N]$, even if we presuppose that~\makebox{$A \subseteq [0,N]$}, which is inconveniently not independent of $A$. Throughout the proof, this issue will introduce several less important side cases. The first step in dealing with it is the following lemma:

\begin{lemma} \label{lem:range-int}
Let $X, Y, U \subseteq [0,n]$. We can compute sets $X', Y', U' \subseteq [0,32n]$ in time $\poly(n)$ such that:
\begin{enumerate}[label=(\roman*)]
    \item $\exists A \subseteq U : X \subseteq A + A \subseteq Y \quad\text{if and only if}\quad \exists A' \subseteq U' : X' \subseteq A' + A' \subseteq Y'$.
    \item All elements in $X', Y', U'$ are even.
    \item $\min(X') = \min(Y') = \min(U') = 0$, $\max(X') = \max(Y') = 32n$, $\max(U') = 16n$.
\end{enumerate}
\end{lemma}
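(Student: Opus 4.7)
The plan is to place scaled and shifted copies of $X$, $Y$, $U$ in $[0, 32n]$ together with two ``anchor'' points $0$ and $16n$ that any feasible set on the right-hand side will be forced to contain. Concretely, I would set
\[
    U' = (2U + 4n) \cup \{0, 16n\}, \qquad X' = (2X + 8n) \cup \{0, 16n, 32n\},
\]
\[
    Y' = (2Y + 8n) \cup (2U + 4n) \cup (2U + 20n) \cup \{0, 16n, 32n\}.
\]
Since $X, Y, U \subseteq [0, n]$, the pieces $2U + 4n$, $2X + 8n$, $2U + 20n$ are confined to $[4n, 6n]$, $[8n, 10n]$, $[20n, 22n]$ respectively, which together with the anchor points $\{0, 16n, 32n\}$ are pairwise disjoint within $[0, 32n]$. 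Conditions (ii) and (iii) are then immediate: every element is a multiple of $2$, the minima are $0$, and the maxima are $16n$ for $U'$ and $32n$ for $X'$ and $Y'$.

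For the forward direction, given $A \subseteq U$ with $X \subseteq A + A \subseteq Y$, I would take $A' = (2A + 4n) \cup \{0, 16n\}$. Using $A + A \subseteq Y \subseteq [0, n]$ one expands
\[
    A' + A' = \bigl(2(A + A) + 8n\bigr) \cup (2A + 4n) \cup (2A + 20n) \cup \{0, 16n, 32n\},
\]
with the four pieces landing in the disjoint windows identified above. The inclusion $X' \subseteq A' + A'$ reduces to $2X + 8n \subseteq 2(A + A) + 8n$, and $A' + A' \subseteq Y'$ follows componentwise from $A + A \subseteq Y$ and $A \subseteq U$.

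For the reverse direction, assume $A' \subseteq U'$ with $X' \subseteq A' + A' \subseteq Y'$. From $U' \subseteq [0, 16n]$ we get $A' \subseteq [0, 16n]$, so $0 \in X'$ forces $0 \in A'$ (both summands realizing $0$ must equal $0$) and $32n \in X'$ forces $16n \in A'$. Setting $B = A' \setminus \{0, 16n\}$, the disjointness of $2U + 4n \subseteq [4n, 6n]$ from $\{0, 16n\}$ gives $B \subseteq 2U + 4n$, so $A := (B - 4n)/2 \subseteq U$ is well-defined and $A' = (2A + 4n) \cup \{0, 16n\}$. Intersecting the inclusions $X' \subseteq A' + A'$ and $A' + A' \subseteq Y'$ with the window $[8n, 12n]$, which catches exactly the $2(A + A) + 8n$ piece of $A' + A'$ and only the $2X + 8n$ (resp.\ $2Y + 8n$) piece of $X'$ (resp.\ $Y'$), yields $X \subseteq A + A \subseteq Y$.

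The main obstacle is calibrating the shifts so that the four components of $A' + A'$ lie in cleanly separable sub-intervals of $[0, 32n]$, which is what lets us decode $X \subseteq A + A$ and $A + A \subseteq Y$ from inclusions restricted to the single window $[8n, 12n]$. The ``padding'' terms $2U + 4n$ and $2U + 20n$ must be inserted into $Y'$ in order to absorb the cross terms $2A + 4n$ and $2A + 20n$ that inevitably arise when $A'$ is summed with the anchors $0$ and $16n$; without them the inclusion $A' + A' \subseteq Y'$ would fail unconditionally.
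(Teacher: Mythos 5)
Your proposal is correct and uses essentially the same approach as the paper: place scaled/shifted copies of $X$, $Y$, $U$ in disjoint windows, with anchors $0$ and $16n$ that $A'$ is forced to contain (via $0, 32n \in X'$ and $A' \subseteq U' \subseteq [0,16n]$), and then read off the condition $X \subseteq A+A \subseteq Y$ from the middle window $[8n, 12n]$. The only cosmetic differences from the paper's construction are that you add $16n$ to $X'$ (harmless but unnecessary, since $32n$ already forces $16n \in A'$) and you absorb the cross terms $2A+4n$, $2A+20n$ with tight sets $2U+4n$, $2U+20n$ rather than the paper's looser choice of full even intervals $2[0,3n]$ and $2[8n,16n]$; both work.
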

\begin{proof}
We write $2X = \set{2x : x \in X}$. The construction is simple (emphasizing again that our goal is not to optimize the constants but rather to achieve simple proofs):
\begin{align*}
    X' &= \set{0} \cup (2X + 8n) \cup \set{32n}, \\
    Y' &= 2[0,3n] \cup (2Y + 8n) \cup 2[8n,16n], \\
    U' &= \set{0} \cup (2U + 4n) \cup \set{16n}.
\end{align*}
For the soundness, let $A \subseteq U$ satisfy that $X \subseteq A + A \subseteq Y$. We choose $A' = \set{0} \cup (2A + 4n) \cup \set{16n}$.
Noting that
\begin{equation*}
    A' + A' = \set{0} \cup (2A + 4n) \cup (2A + 2A + 8n) \cup \set{16n} \cup (2A + 20n) \cup \set{32n},
\end{equation*}
it is straightforward to verify that $A' \subseteq U'$ and that $X' \subseteq A' + A' \subseteq Y'$.

For the converse direction, suppose that $A' \subseteq U'$ satisfies that $X' \subseteq A' + A' \subseteq Y'$. Let $A$ be the part of $A'$ that falls in the interval $[4n,6n]$ where we divide each number by $2$ (here we use that $A'$ contains only even numbers since $A' \subseteq U'$). Using again that $A' \subseteq U'$ we can express~$A'$ as~$\set{0} \cup (A + 4n) \cup \set{16n}$. It follows easily that $X \subseteq A + A \subseteq Y$.
\end{proof}

\begin{proof}[Proof of \cref{lem:masking-int}]
By first applying the transformation from \cref{lem:range-int}, we assume throughout that $X, Y, U \subseteq [0,32n]$ with $\min(X) = \min(Y) = \min(U) = 0$, $\max(X) = \max(Y) = 32n$ and $\max(U) = 16n$, and that all three sets contain only even numbers. For simplicity, we write~\makebox{$N = 16n$}. Furthermore, if $X \not\subseteq Y$, then the reduction is trivial, so assume that $X \subseteq Y$ and let $Z = Y \setminus X$. In this proof, we will construct sets~\makebox{$X', U' \subseteq \Int$} containing negative integers; this can later be fixed by shifting all sets appropriately into the positive range.

The construction of the sets $X'$ and $U'$ is, unfortunately, extremely tedious. We define
\begin{align*}
    X' = \bigcup_{\substack{-2 \leq c \leq 2\\-8N \leq k \leq 8N}} (c \cdot 64N^2 + k \cdot 4N + X'_{c, k}), \\
    U' = \bigcup_{\substack{-1 \leq a \leq 1\\-4N \leq i \leq 4N}} (a \cdot 64N^2 + i \cdot 4N + U'_{a, i}),
\end{align*}
where the pieces $X'_{c, k} \subseteq [0,2N]$ and $U'_{a, i} \subseteq [0,N]$ are defined as follows; we write $I = [-4N,4N] \setminus \set{0}$ for simplicity here:
\begin{alignat*}{3}
    &X'_{2, k} &&= [0,2N] &&\qquad(\text{for $k \in I + I$}), \\
    &X'_{1, k} &&= [0,2N] &&\qquad(\text{for $k \in I$}), \\
    &X'_{1, k} &&= \bigcup_{\substack{i \in I, z \in Z\\i \pm z = k}} [z/2, z/2 + N] &&\qquad(\text{for $k \not\in I$}), \\
    &X'_{0, 0} &&= Y, \\
    &X'_{0, k} &&= [0,2N] &&\qquad(\text{for $k \in I$}), \\
    &X'_{-1, 0} &&= [0,2N], \\
    &X'_{-1, z} &&= X'_{-1, -z} = [z/2, z/2 + N] &&\qquad(\text{for $z \in Z$}), \\
    &X'_{-2, 0} &&= [0,2N],
\end{alignat*}
and
\begin{alignat*}{3}
    &U'_{1, i} &&= [0,N] &&\qquad(\text{for $i \in I$}), \\
    &U'_{0, 0} &&= U, \\
    &U'_{0, z} &&= U'_{0, -z} = \set{z/2} &&\qquad(\text{for $z \in Z$}), \\
    &U'_{-1, 0} &&= [0,N].
\end{alignat*}
Moreover, all sets $X'_{\ell, i}$ and $U'_{\ell, i}$ which are not covered by these cases are defined to be empty.

Before checking the soundness and completeness properties, we first check that the size of~$X', U'$ is appropriate. We have that $X', U' \subseteq [-256N^2, 256N^2]$, and thus by shifting them into the positive range, we would obtain sets in $[0, 512N^2]$. Recalling that $N = 16n$, the universe bound $512 N^2 = 2^{17} n$ holds.

\paragraph{Soundness}
Suppose that there is some set $A \subseteq U$ with $X \subseteq A + A \subseteq Y$. We give a set $A' \subseteq U'$ with $X' = A' + A'$. As before, we define
\begin{equation*}
    A' = \bigcup_{\substack{-1 \leq a \leq 1\\-4N \leq i \leq 4N}} (a \cdot 64N^2 + i \cdot 4N + A'_{a, i})
\end{equation*}
where $A'_{a, i} \subseteq [0,n]$ is defined as follows:
\begin{alignat*}{3}
    &A'_{1, i} &&= [0,N] &&\qquad(\text{for $i \in I$}), \\
    &A'_{0, 0} &&= A, \\
    &A'_{0, z} &&= A'_{0, -z} = \set{z/2} &&\qquad(\text{for $z \in Z$}), \\
    &A'_{-1, 0} &&= [0,N].
\end{alignat*}
From the construction, it is immediate that $A' \subseteq U'$. Observe further that $0, N \in A$ (as otherwise we cannot satisfy that $\set{0, 2N} \subseteq X \subseteq A + A$), and thus 
$A + [0,N] = [0,2N]$; this observation will be useful in the following paragraphs.

Next, we show that $A' + A' \subseteq X'$. To this end, we verify that $A'_{a, i} + A'_{b, j} \subseteq X'_{a+b,i+j}$ for all choices~$a, b, i, j$. This is not hard, but it requires a lot of case work. We state the relevant cases and omit some others due to symmetries:
\begin{alignat*}{3}
    &A'_{1, i} + A'_{1, j} &&= [0,2N] = X'_{2, i+j} &&\qquad\text{(for $i, j \in I$)}, \\
    &A'_{1, i} + A'_{0, 0} &&= [0,N] + A = [0,2N] = X'_{1, i} &&\qquad\text{(for $i \in I$)}, \\
    &A'_{1, i} + A'_{0, z} &&= [0,N] + \set{z/2} = [z/2,  z/2 + N] \subseteq X'_{1, i + z} &&\qquad\text{(for $i \in I, z \in Z$)}, \\
    &A'_{1, i} + A'_{-1, 0} &&= [0,2N] = X'_{0, i} &&\qquad\text{(for $i \in I, z \in Z$)}, \\
    &A'_{0, 0} + A'_{0, 0} &&= A + A \subseteq Y = X'_{0, 0}, \\
    &A'_{0, 0} + A'_{0, z} &&= A + \set{z/2} \subseteq [0,2N] = X'_{0, z} &&\qquad\text{(for $z \in Z$)}, \\
    &A'_{0, 0} + A'_{-1, 0} &&= A + [0,N] = [0,2N] = X'_{-1, 0}, \\
    &A'_{0, z} + A'_{0, z'} &&= \set{z/2} + \set{z/2} \subseteq [0,2N] = X'_{0, z+z'} &&\qquad\text{(for $z, z' \in Z$)}, \\
    &A'_{0, z} + A'_{0, -z} &&= \set{z/2} + \set{z/2} = \set{z} \subseteq Z \subseteq Y = X'_{0, 0} &&\qquad\text{(for $z \in Z$)}, \\
    &A'_{0, z} + A'_{-1, 0} &&= \set{z/2} + [0,N] = [z/2, z/2 + N] = X'_{-1, z} &&\qquad\text{(for $z \in Z$)}, \\
    &A'_{-1, 0} + A'_{-1, 0} &&= [0, 2N] = X'_{-2, 0}.
\end{alignat*}

Let us finally check that $X' \subseteq A' + A'$. That is, our task is to verify that for all pairs $c, k$ it holds that
\begin{equation*}
    X'_{c, k} \subseteq \bigcup_{\substack{a + b = c\\i + j = k}} (A'_{a, i} + A'_{b, j}).
\end{equation*}
This again involves dealing with many cases:
\begin{alignat*}{3}
    &X'_{2, k} &&= [0,2N] = A'_{1, i} + A'_{1, j} &&\qquad\text{(for $k = i + j \in I + I$)}, \\
    &X'_{1, k} &&= [0,2N] = A + [0,N] = A'_{0, 0} + A'_{1, k} &&\qquad\text{(for $k \in I$)}, \\
    &X'_{1, k} &&= \bigcup_{\substack{i \in I, z \in Z\\i \pm z = k}} [z/2, z/2 + N] = \bigcup_{\substack{i \in I, z \in Z\\i \pm z = k}} (A'_{1, i} + A'_{0, z}) &&\qquad\text{(for $k \not\in I$)}, \\
    &X'_{0, 0} &&= Y = X \cup Z \\
    & &&\subseteq (A + A) \cup \bigcup_{z \in Z} (\set{z/2} + \set{z/2}) \\
    & &&= (A'_{0, 0} + A'_{0, 0}) \cup \bigcup_{z \in Z} (A'_{0, z} \cup A'_{0, -z}), \\
    &X'_{0, k} &&= [0,2N] = A'_{1, k} + A'_{-1, 0} &&\qquad\text{(for $k \in I$)}, \\
    &X'_{-1, 0} &&= [0, 2N] = A + [0,N] = A'_{0, 0} + A'_{-1, 0}, \\
    &X'_{-1, z} &&= [z/2, z/2 + N] = [0,N] + \set{z/2} = A'_{-1, 0} + A'_{0, z} &&\qquad\text{(for $z \in Z$)}, \\
    &X'_{-2, 0} &&= [0, 2N] = A'_{-1, 0} + A'_{-1, 0}.
\end{alignat*}

\paragraph{Completeness}
The completeness proof is simpler in comparison. Suppose that there is a set~\makebox{$A' \subseteq U'$} with $X' = A' + A'$; we show that there is some set $A \subseteq U$ satisfying that $X \subseteq A + A \subseteq Y$. Let, in analogy to before, $A'_{a, i} \subseteq [0,N]$ denote the subset of $A'$ that falls into the length-$N$ range starting at~\smash{$a \cdot 64N^2 + i \cdot 4N$}. These are the only nonempty regions in $A'$ (by the condition that~\makebox{$A' \subseteq U'$}), and therefore it holds that
\begin{equation*}
    X'_{c, k} = \bigcup_{\substack{a + b = c\\i + j = k}} (A'_{a, i} + A'_{b, j}).
\end{equation*}
On the one hand, it is clear that $A'_{0, 0} + A'_{0, 0} \subseteq X'_{0, 0} = Y$. On the other hand, by the construction, it is easy to verify that each element $x \in Y \setminus Z$ can only appear in 
\begin{equation*}
    \bigcup_{\substack{a + b = 0\\i + j = 0}} (A'_{a, i} + A'_{b, j}) = (A'_{0, 0} + A'_{0, 0}) \cup \bigcup_{z \in Z} (A'_{0, z} + A'_{0, -z})
\end{equation*}
if $x \in A'_{0, 0} + A'_{0, 0}$. In particular, this implies that $X = Y \setminus Z \subseteq A'_{0, 0} + A'_{0, 0}$. Finally, observe that trivially $A'_{0, 0} \subseteq U'_{0, 0} = U$, and therefore $A'_{0, 0}$ satisfies all three relevant conditions.

\paragraph{Running Time}
The running time due to \cref{lem:range-int} is negligible, and we can easily compute the sets $X', U'$ in polynomial time.
\end{proof}

\subsection{Masking for \texorpdfstring{$\Field_p^d$}{Fpd}} \label{sec:masking:sec:Fp}
In comparison to the previous subsection, the masking proof over $\Field_p^d$ is surprisingly simple.

\begin{lemma}[Masking for $\Field_p^d$] \label{lem:masking-Fp}
Let $p$ be a prime and let $X, Y, U, V \subseteq \Field_p^d$. In time~$\poly(p^d)$ we can construct sets~\smash{$X', U', V' \subseteq \Field_p^{d'}$}, where $d' = 2d + 3$, such that:
\begin{equation*}
    \exists A \subseteq U, B \subseteq V : X \subseteq A + B \subseteq Y \qquad\text{if and only if}\qquad \exists A' \subseteq U', B' \subseteq V' : X' = A' + B'.
\end{equation*}
\end{lemma}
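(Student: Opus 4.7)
The plan is to follow the same high-level philosophy as the integer masking proof (\cref{lem:masking-int}) while exploiting the clean structural advantage of $\Field_p^d$: for any nonempty $S \subseteq \Field_p^d$, we have $\Field_p^d + S = \Field_p^d$. This is the natural analog of the ``interval mask'' trick but without the boundary issues that forced the elaborate case analysis in the integer setting. I plan to write elements of $\Field_p^{d'}$ as $(i, x)$ with $i$ in a small ``slot'' subspace (a few of the $2d+3$ coordinates) and $x \in \Field_p^d$ the ``content coordinate,'' and build a slot-based construction so that the pairwise sums of $A'$ and $B'$ populate $X'$ exactly as intended.

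Concretely, I would designate a main slot $0$, set $X'_0 = Y$, and aim for slot $0$ of $A' + B'$ to always equal $(A + B) \cup Z$ where $Z = Y \setminus X$; this equals $Y$ if and only if $X \subseteq A + B \subseteq Y$. To realize this, $A'$ and $B'$ each receive three pieces: at slot $0$ place $A$ and $B$ respectively, contributing $A + B$ to the main slot; at paired auxiliary slots $s_1$ and $-s_1$ place the singletons $\set{(s_1, z) : z \in Z}$ in $A'$ and $(-s_1, 0)$ in $B'$, contributing $Z$ to the main slot; and at further auxiliary slots $s_2, s_3$ place the ``full masks'' $\set{s_2} \times \Field_p^d$ in $A'$ and $\set{s_3} \times \Field_p^d$ in $B'$ whose purpose is to cover the remaining ``garbage slots.'' These garbage slots would otherwise hold uncontrollable content such as $A$, $Z + B$, or $\set{0}$, but because $\Field_p^d + S = \Field_p^d$ for any nonempty $S$, any sum involving a full mask fills its target slot entirely. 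I would set $X'$ to $\Field_p^d$ at every active slot other than $0$, matching exactly what $A' + B'$ produces.

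The main obstacle will be choosing $s_1, s_2, s_3$ (in $\Field_p^3$ or a small extension) carefully enough to ensure (a) every garbage slot is hit by a full-mask contribution, and (b) no full-mask sum lands at slot $0$, which would catastrophically contaminate it with $\Field_p^d$. For $p \geq 3$, a sufficiently generic choice of slot vectors works, since no unintended sum of the chosen slots will equal zero. For $p = 2$ this is delicate since $-s = s$ and $2s = 0$ automatically force many collisions; there the two full masks must be placed at ``private'' slots (one lying only in $S_A$ and one lying only in $S_B$) and related by an identity such as $s_2 + s_3 = s_1$, which a short finite case check confirms covers every garbage slot without touching slot $0$. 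The sets $U', V'$ simply encode the target slot decomposition ($U'_0 = U$, $U'_{s_1} = Z$, with $\Field_p^d$-permissions at the full-mask slots, and analogously for $V'$). Completeness then reads off $A := A'_0$ and $B := B'_0$ from any feasible $A' \subseteq U', B' \subseteq V'$ and uses the slot-$0$ identity $(A + B) \cup (A'_{s_1} + B'_{-s_1}) = Y$ together with $A'_{s_1} + B'_{-s_1} \subseteq Z$ to conclude $A \subseteq U, B \subseteq V, X \subseteq A + B \subseteq Y$; soundness builds $A', B'$ from $A, B$ directly via the above recipe. The construction is polynomial as it amounts to a constant number of shifted copies of the input sets and of $\Field_p^d$ itself, comfortably fitting in $d' = 2d + 3$ dimensions.
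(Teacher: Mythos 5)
Your plan is a legitimate and more economical route than the paper's: you park all of $Z$ at a single auxiliary slot and mask garbage with just two full $\Field_p^d$-sized masks, whereas the paper spreads $Z$ as singletons across $|Z|$ slots (hence a $(d+3)$-dimensional slot space) and places a full mask at \emph{every} slot $i+w$ with $i$ in a hyperplane, which covers all garbage slots automatically and uniformly in $p$. Your main-slot accounting---that slot $0$ of $A'+B'$ equals $(A'_0+B'_0)\cup Z'$ for some $Z'\subseteq Z$, which equals $Y$ exactly when $X\subseteq A'_0+B'_0\subseteq Y$---is correct, and so is reading off $A=A'_0$, $B=B'_0$ in completeness.

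However, the sentence ``for $p\geq 3$, a sufficiently generic choice of slot vectors works'' is wrong, and it is a genuine gap. Genericity gives you requirement~(b) (no mask-sum lands at slot~$0$), but it \emph{defeats} requirement~(a). The garbage slots are $s_1$ and $-s_1$, carrying $Z+B$ and $A$ respectively; the slots hit by a full mask are $\set{s_2,\ s_2-s_1,\ s_2+s_3,\ s_3,\ s_1+s_3}$. For a generic triple $(s_1,s_2,s_3)$, neither $s_1$ nor $-s_1$ lies in that set, so the garbage never gets erased and $A'+B'\neq X'$. You need a deliberate coincidence, precisely of the sort you correctly arranged for $p=2$ with $s_2+s_3=s_1$. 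A workable choice for odd $p$: always set $s_2=-s_1$ (so the $A'$-mask covers slot $-s_1$ via $B'_0$), then set $s_3=2s_1$ when $p\geq 5$ (so $s_2+s_3=s_1$ covers slot $s_1$), or take $s_3$ linearly independent of $s_1$ when $p=3$ (where $s_2-s_1=-2s_1=s_1$ already does the job). Even with a fixed choice you still owe the finite check that no mask-sum equals $0$ and that the required slot distinctions hold for each $p$. Finally, you implicitly assume $A,B,Z\neq\emptyset$---a full mask added to an empty slot gives $\emptyset$, not $\Field_p^d$---which the paper flags as an explicit corner case and you should too.
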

\begin{proof}
If $X \not\subseteq Y$ then the reduction is trivial, so assume that $X \subseteq Y$ and let $Z = Y \setminus X$. Consider the vector space~\smash{$\Field_p^{d+3}$}; by embedding $\Field_p^d$ into the first $d$ coordinates, say, we view $Z$ also as a subset of~\smash{$\Field_p^{d+3}$}. We exploit the remaining three dimensions by letting $a, b, w \perp Z$ denote nonzero pairwise orthogonal vectors; let $W \subseteq \Field_p^d$ denote the subspace orthogonal to $w$. We define~\smash{$X' = \set{(i, x) : i \in \Field_p^{d+3}, x \in X'_i}$} and similarly $U'$ and $V'$, where $X'_i, U'_i, V'_i \subseteq \Field_p^d$ are the sets defined as follows:
\begin{alignat*}{3}
    &X'_{a + b} &&= Y, \\
    &X'_i &&= \smash{\Field_p^d} &&\qquad(\text{for $i \in W \setminus \set{a + b}$}), \\
    &X'_{i + w} &&= \smash{\Field_p^d} &&\qquad(\text{for $i \in W$}), \\
    &X'_{i - w} &&= \smash{\Field_p^d} &&\qquad(\text{for $i \in Z \cup \set{a}$, only relevant if $p > 2$})
\intertext{and}
    &U'_a &&= U, \\
    &U'_z &&= \set{z} &&\qquad(\text{for $z \in Z$}), \\
    &U'_{i + w} &&= \smash{\Field_p^d} &&\qquad(\text{for $i \in W \setminus \set{a + b}$}),
\intertext{and}
    &V'_b &&= V, \\
    &V'_{a + b - z} &&= \set{0} &&\qquad(\text{for $z \in Z$}), \\
    &V'_{-w} &&= \smash{\Field_p^d}.
\end{alignat*}
Moreover, all sets $X'_i, U'_i, V'_i$, which we have not specified here, are set to be empty.

\paragraph{Soundness}
For the soundness, assume that there is a pair $A \subseteq U, B \subseteq V$ with $X \subseteq A + B \subseteq Y$. We construct sets $A' \subseteq U'$ and $B' \subseteq V'$ with $X' = A' + B'$. Write $A' = \set{(i, x) : i \in I, x \in A'_i}$ and similarly for $B'$ where the sets $A'_i, B'_i \subseteq \Field_p^d$ are defined as follows:
\begin{alignat*}{3}
    &A'_a &&= A, \\
    &A'_z &&= \set{z} &&\qquad(\text{for $z \in Z$}), \\
    &A'_{i + w} &&= \Field_p^d &&\qquad(\text{for $i \in W \setminus \set{a + b}$}),
\intertext{and}
    &B'_b &&= B, \\
    &B'_{a + b - z} &&= \set{0} &&\qquad(\text{for $z \in Z$}), \\
    &A'_{-w} &&= \Field_p^d.
\end{alignat*}
Again, all sets $A'_i, B'_i$ that we have not specified here are assumed to be empty. From the construction, it is immediate that $A' \subseteq U'$ and that $B' \subseteq V'$.

Next, we show that $A' + B' \subseteq X'$. To this end, we verify that $A'_i + B'_j \subseteq X'_{i + j}$ for all $i, j \in I$. This is trivial whenever~\smash{$X'_{i + j} = \Field_p^d$} which covers almost all cases. The only other relevant cases are as follows:
\begin{alignat*}{3}
    &A'_a + B'_b &&= A + B \subseteq Y = X'_{a + b}, \\
    &A'_z + B'_{a + b - z} &&= \set{z} + \set{0} \subseteq Z \subseteq Y = X'_{a + b}.
\end{alignat*}

Let us finally check that $X' \subseteq A' + B'$. That is, we check that for all $k$, $X'_k \subseteq \bigcup_{i + j = k} (A'_i + B'_j)$. The most interesting case is
\begin{equation*}
    X'_{a + b} = Y = X \cup Z \subseteq (A + B) \cup \bigcup_{z \in Z} \set{z} = (A'_a + B'_b) \cup \bigcup_{z \in Z} (A'_z + B'_{a + b - z}).
\end{equation*}
The less interesting cases are
\begin{alignat*}{5}
    &X'_i &&= \smash{\Field_p^d} &&= \smash{\Field_p^d} + \smash{\Field_p^d} &&= A'_{i + w} + B'_{-w} &&\qquad(\text{for $i \in W \setminus \set{a + b}$}), \\
    &X'_{i + w} &&= \smash{\Field_p^d} &&= \smash{\Field_p^d} + B &&= A'_{i + w - b} + B'_b &&\qquad(\text{for $i \in W$}), \\
    &X'_{z - w} &&= \smash{\Field_p^d} &&= \set{z} + \smash{\Field_p^d} &&= A'_z + B'_{-w} &&\qquad(\text{for $z \in Z$}), \\
    &X'_{a - w} &&= \smash{\Field_p^d} &&= A + \smash{\Field_p^d} &&= A'_a + B'_{-w}.
\end{alignat*}
(Strictly speaking, here we assumed that $A, B \neq \emptyset$. This can easily be enforced by a trivial corner case.)

\paragraph{Completeness}
Suppose that there are sets~\makebox{$A' \subseteq U', B' \subseteq V'$} with $X' = A' + B'$; we show that there are sets $A \subseteq U, B \subseteq V$ with $X \subseteq A + B \subseteq Y$. We again write~\smash{$A'_i = \set{x \in \Field_p^d : (i, x) \in A'}$} for~\smash{$i \in \Field_p^{d+3}$}, and similarly for $B'$. From our construction, it is clear that
\begin{equation*}
    X'_k = \bigcup_{\substack{i, j \in \Field_p^{d+3}\\i + j = k}} (A'_i + B'_j) \subseteq \bigcup_{\substack{i, j \in \Field_p^{d+3}\\i + j = k}} (U'_i + V'_j).
\end{equation*}
In particular, consider the case $k = a + b$. By carefully checking all combinations of $i$ and $j$ above, we in fact have that
\begin{equation*}
    X'_{a + b} = (A'_a + B'_b) \cup \bigcup_{z \in Z} ((A'_z + B'_{a + b - z}) \cup (A'_{a + b - z} + B'_{z})).
\end{equation*}
By the construction of $U'$ and $V'$, the big union on the right becomes exactly a subset $Z' \subseteq Z$. Therefore, and using that $X'_{a + b} = Y$, we have that $Z' \cup (A'_a + B'_b) = Y$. In particular, $X \subseteq A'_a + B'_b$, and we have thus witnessed sets $A = A'_a$ and $B = B'_b$ as desired.
\end{proof}

\section{Reduction from 3-SAT} \label{sec:sat} 
In this section, we employ the technical lemmas from Sections \ref{sec:pos} and \ref{sec:masking} to show a polynomial-time reduction from 3-SAT to Sumset Recognition. We start with the integer case in \cref{sec:sat:sec:int} and then deal with finite fields in \cref{sec:sat:sec:Fp}.

\subsection{Reduction for \texorpdfstring{$\Int$}{Z}} \label{sec:sat:sec:int}
Given a 3-SAT instance, a formula $\phi$ on $n$ variables and $m$ clauses, the reduction outputs a Sumset Recognition instance; a set $X_\phi \subseteq [0,2^{57}(n+m)^4]$, 
such that $\phi$ is satisfiable if an only if there exists $A \subseteq \mathbb{Z}$ such that $X_{\phi}=A+A$. 
The main technical part in this section is a reduction for a related problem where given $\phi$, we output three sets $X,Y,U$ such that $\phi$ is satisfiable if and only if 
exists $A \subseteq U$ such that $X \subseteq A+A \subseteq Y$. In this reduction, we will employ the following Sidon set construction. Recall that we call a set $S$ \emph{Sidon} if it does not contain nontrivial solutions to the equation $i + j = k + \ell$, or equivalently, if~\smash{$|S + S| = \binom{|S|+1}{2}$}.

\begin{lemma}[Sidon Sets over $\Int$] \label{lem:sidon-int}
    In time $\poly(n)$ we can compute a Sidon set $S \subseteq [0, 4n^2]$ of~size~$n$.
\end{lemma}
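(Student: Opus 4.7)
My plan is to use the classical Erdős--Turán construction of Sidon sets from quadratic residues modulo a prime, and then truncate. The high-level approach has three steps: (i) find a prime $p$ in the range $[n, 2n)$, (ii) write down the Erdős--Turán set and take its first $n$ elements, and (iii) check the Sidon property and the size/universe bound.

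For step (i), by Bertrand's postulate there is always a prime $p$ with $n \le p < 2n$. Such a prime can be found in polynomial time by trial division on each candidate in $[n, 2n)$ (or by AKS, but this is overkill). For step (ii), define
\begin{equation*}
    S = \set{2pi + (i^2 \bmod p) : 0 \le i \le n-1}.
\end{equation*}
This is a set of $n$ distinct integers (distinctness is clear because the ``high part'' $2pi$ already distinguishes different values of $i$). For the universe bound, the largest element is at most $2p(n-1) + (p-1) < 2pn \le 4n^2$, so indeed $S \subseteq [0, 4n^2]$.

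For step (iii), I would show the Sidon property by the standard trick: suppose $2pi_1 + r_1 + 2pi_2 + r_2 = 2pi_3 + r_3 + 2pi_4 + r_4$ with $r_k := i_k^2 \bmod p \in [0, p-1]$ and each $i_k \in [0, n-1] \subseteq [0, p-1]$. Since $r_1 + r_2, r_3 + r_4 < 2p$, comparing the high and low parts after dividing by $2p$ forces $i_1 + i_2 = i_3 + i_4$ (as integers) and $r_1 + r_2 = r_3 + r_4$ (as integers, hence modulo $p$). Thus $i_1^2 + i_2^2 \equiv i_3^2 + i_4^2 \pmod p$ and $i_1 + i_2 \equiv i_3 + i_4 \pmod p$. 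Subtracting $(i_1+i_2)^2 = (i_3+i_4)^2$ yields $2 i_1 i_2 \equiv 2 i_3 i_4 \pmod p$; since $p$ is odd (for $n \ge 2$ we can pick $p \ge 3$), we may divide by $2$ and conclude $i_1 i_2 \equiv i_3 i_4 \pmod p$. Thus the multisets $\set{i_1, i_2}$ and $\set{i_3, i_4}$ are roots of the same monic quadratic modulo $p$, and since all four lie in $[0, p-1]$ the multisets coincide. Hence $S$ is Sidon.

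The main obstacle, if any, is just matching the constant $4$ in the universe bound $4n^2$; this is why I chose Bertrand's bound $p < 2n$ rather than a weaker estimate, and why I restrict to $i \in [0, n-1]$ rather than $i \in [0, p-1]$. Everything else is a direct unwinding of a textbook construction and can be computed in time $\poly(n)$.
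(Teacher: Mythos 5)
Your proof uses the same classical Erd\H{o}s--Tur\'an construction over $\Field_p$ with a Bertrand prime $p \in [n, 2n)$ and the lift $x \mapsto 2px + (x^2 \bmod p)$ that the paper uses, and your verification of the Sidon property (matching the sum and product of a quadratic's roots mod $p$) is the standard argument that also underlies the paper's $\Field_p^2$-level check. In fact, your explicit restriction to $i \in [0, n-1]$ is a slight improvement in bookkeeping: it cleanly yields $\max(S) < 2pn \le 4n^2$, whereas the paper's stated inclusion $S \subseteq [0, 2p(p-1)+1] \subseteq [0, 4n^2]$ is not literally correct for $p$ near $2n$ and implicitly relies on taking only the $n$ smallest lifted elements, exactly as you do.
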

\begin{proof}
    This result is classic, but we include a quick proof based on Erdős and Tur{\'a}n's construction~\cite{ErdosT41} for the sake of completeness. By Bertrand's postulate, there is a prime number $n \leq p < 2n$, and we can find $p$ by, say, Eratosthenes' sieve in deterministic time $\Order(n \log \log n)$. As Erdős and Tur{\'a}n proved~\cite{ErdosT41}, the set
    \begin{equation*}
        S' = \set{(x, x^2) : x \in \Field_p}
    \end{equation*}
    is a Sidon set over $\Field_p^2$. Indeed, suppose that $(i, i^2) + (j, j^2) = (k, k^2) + (\ell, \ell^2)$. Equivalently, we can write $i - k = \ell - j$ and $i^2 - k^2 = \ell^2 - j^2$. Thus, either $i - k = \ell - j = 0$ or $i + k = \ell + j$. In the latter case, in combination with the initial assumption, we find that $i = \ell$ and $j = k$. This set~$S'$ can be lifted to the integers by embedding $\Field_p^2$ into the integers via $(x, y) \mapsto x \cdot 2p + y$ (viewing $0 \leq x, y < p$ as integers). The resulting set $S$ satisfies that $S \subseteq [0, 2p(p-1) + 1] \subseteq [0,4n^2]$ and can clearly be constructed in time $\poly(n)$.
\end{proof}

\begin{corollary} \label{cor:sidon-modified}
    In time $\poly(n, m)$ we can compute integers~\makebox{$s_1, \dots, s_n, t_1, \dots, t_m$} such that:
    \begin{enumerate}[label=(\roman*)]
        \item $0 \leq s_1, \dots, s_n \leq N$ and $4N \leq t_1, \dots, t_m \leq  5N$, where $N = 16(n + m)^2$.
        \item Whenever $|2 s_i - s_j - s_k| < 4$, then $i = j = k$.
        \item Whenever $|s_i - s_j + t_k - t_\ell| < 4$, then $i = j$ and $k = \ell$.
    \end{enumerate}
\end{corollary}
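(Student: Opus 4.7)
The plan is to scale and shift a single large Sidon set. First I would invoke Lemma~\ref{lem:sidon-int} with parameter $n+m$ to obtain, in polynomial time, a Sidon set $S' = \{s'_1, \dots, s'_{n+m}\} \subseteq [0, 4(n+m)^2]$. Then I would set $s_i = 4 s'_i$ for $i \in [n]$ and $t_k = 4 s'_{n+k} + 4N$ for $k \in [m]$, where $N = 16(n+m)^2$. Property~(i) is then immediate: $s_i \in [0, 16(n+m)^2] = [0, N]$ and $t_k \in [4N, 4N + 16(n+m)^2] = [4N, 5N]$.

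The key observation is that scaling by $4$ creates a ``gap'': any integer linear combination of the $s_i$'s and $t_k$'s which reduces to a linear combination of the $s'_j$'s (i.e., where the $4N$-shifts cancel) is a multiple of $4$, and is therefore either zero or at least $4$ in absolute value. Consequently, the hypothesis in (ii), $|2 s_i - s_j - s_k| < 4$, forces $2 s'_i = s'_j + s'_k$, and the Sidon property then forces $\{i, i\} = \{j, k\}$ as multisets, i.e., $i = j = k$. For~(iii), observe that the $4N$-shifts cancel: $s_i - s_j + t_k - t_\ell = 4\bigl((s'_i + s'_{n+k}) - (s'_j + s'_{n+\ell})\bigr)$. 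Thus $|s_i - s_j + t_k - t_\ell| < 4$ forces $s'_i + s'_{n+k} = s'_j + s'_{n+\ell}$, and the Sidon property gives $\{i, n+k\} = \{j, n+\ell\}$ as multisets. Since the indices in the first slot lie in $[1,n]$ and those in the second in $[n+1, n+m]$, the only possibility is $i = j$ and $k = \ell$.

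The main design choice, which I consider the only subtle point, is to use a \emph{single} Sidon set indexed in two disjoint blocks $[1,n]$ and $[n+1, n+m]$ rather than two independent Sidon sets for the $s$'s and the $t$'s. This automatically rules out unwanted ``cross'' coincidences between the two blocks in property~(iii), avoiding any case analysis. Apart from this, the construction is routine and runs in $\poly(n,m)$ time since the Sidon set can be built in polynomial time by Lemma~\ref{lem:sidon-int} and the scaling/shifting is trivial.
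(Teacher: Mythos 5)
Your proof is correct and matches the paper's proof essentially verbatim: the paper also takes a single Sidon set of size $n+m$ from Lemma~\ref{lem:sidon-int} (naming its elements $s'_1,\dots,s'_n,t'_1,\dots,t'_m$ rather than $s'_1,\dots,s'_{n+m}$), scales by $4$, and shifts the second block by $64(n+m)^2 = 4N$. Your explanation of why scaling by $4$ forces exact cancellation, and why the two-block indexing rules out cross-coincidences, is exactly the reasoning the paper leaves implicit in ``Property~(iii) follows by a similar argument.''
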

\begin{proof}
    The previous lemma yields a Sidon set $\set{s_1', \dots, s_n', t_1', \dots, t_m'} \subseteq [0,4(n + m)^2]$; we then choose $s_i := 4 s_i'$ and $t_i := 4t_i' + 64(n + m)^2$. Property~(i) is fulfilled by construction. For property~(ii), note that $|2 s_i - s_j - s_k| < 4$ if and only if $2 s_i' - s_j' - s_k' = 0$. The Sidon assumption guarantees that this happens if and only if $i = j = k$. Property~(iii) follows by a similar argument.
\end{proof}

\begin{lemma}[Reduction from 3-SAT for $\Int$]\label{lem:3sat-int}
    There is an algorithm that, given a 3-SAT formula $\phi$ on $n$ variables and $m$ clauses, outputs sets $X, Y, U \subseteq [0,160(n+m)^2]$, 
    such that
    \begin{equation*}
        \text{$\phi$ is satisfiable} \qquad\text{if and only if}\qquad \exists A \subseteq U : X \subseteq A + A \subseteq Y.
    \end{equation*}
    The algorithm runs in time $O(n^2 + m^2)$. 
\end{lemma}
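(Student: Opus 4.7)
The reduction follows the blueprint from the technical overview. Fix once and for all a constant-size \emph{variable gadget}: a set $V$ together with two sumset roots $V_0, V_1 \subseteq V$ satisfying $V_0 + V_0 = V_1 + V_1 = V$ and two distinguished witnesses $v_0 \in V_0 \setminus V_1$, $v_1 \in V_1 \setminus V_0$ (the explicit gadget mentioned in the footnote after the outline of Step~1 works). Obtain shifts $s_1, \dots, s_n, t_1, \dots, t_m$ via \cref{cor:sidon-modified}, so that the only near-coincidences among sums of two shifts are the trivial ones; all relevant sums lie in $[0, 10N]$ with $N = 16(n+m)^2$, comfortably within the required window after adding a $v_i$-offset of constant size. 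For each clause $k$, define the \emph{clause gadget}
\[
  C_k = \bigcup_{x_i \text{ positive in }k}\!\!\{t_k - s_i - v_1\} \;\cup\!\! \bigcup_{x_i \text{ negative in }k}\!\!\{t_k - s_i - v_0\}.
\]

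\paragraph{Construction of $X, Y, U$.}
Set $U = \bigcup_i ((V_0 \cup V_1) + s_i) \;\cup\; \bigcup_k C_k$. Take $Y = (U + U) \cup \{t_1, \dots, t_m\}$, i.e.\ every element that could conceivably appear in $A + A$ for an $A \subseteq U$, together with the clause-enforcing positions. For $X$, take the union of exactly those pieces that are common to $A_\alpha + A_\alpha$ for \emph{every} assignment $\alpha$: namely
\[
  X = \bigcup_i (V + 2s_i) \;\cup\; \bigcup_{k,\ell} (C_k + C_\ell) \;\cup\; \bigcup_{i, k}\bigl((V_0 \cap V_1) + s_i + C_k\bigr) \;\cup\; \{t_1, \dots, t_m\}.
\]
Here $V+2s_i = V_{\alpha_i}+V_{\alpha_i}+2s_i$ is independent of $\alpha_i$, while the problematic variable-variable and variable-clause cross terms that do depend on $\alpha$ are pushed into $Y \setminus X$, where they may or may not be hit.

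\paragraph{Soundness and completeness.}
For soundness, if $\alpha$ satisfies $\phi$, take $A_\alpha = \bigcup_i (V_{\alpha_i}+s_i) \cup \bigcup_k C_k \subseteq U$. Each element of $X$ is realized: the pieces $V + 2s_i$ come from $V_{\alpha_i}+V_{\alpha_i}+2s_i$, the clause-clause and variable-clause terms are built into $A_\alpha$, and for each clause $k$ a satisfying literal produces $t_k = (v_{\alpha_i}+s_i) + (t_k - s_i - v_{\alpha_i})$. Also $A_\alpha + A_\alpha \subseteq U + U \subseteq Y$ trivially. For completeness, given $A \subseteq U$ with $X \subseteq A+A \subseteq Y$, one recovers $\alpha$ as follows. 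The modified Sidon property (\cref{cor:sidon-modified}(ii)) guarantees that, for any $x \in V + 2s_i$, both summands of any representation $x = a+b$ with $a, b \in U$ must lie in the $i$-th variable slot $(V_0 \cup V_1) + s_i$; hence $A_i := (A - s_i) \cap (V_0 \cup V_1)$ satisfies $A_i + A_i = V$. A short case analysis using $v_0 \notin V_1$ and $v_1 \notin V_0$ (and the fact that $V$ has only the two roots $V_0, V_1$) pins down $A_i \in \{V_0, V_1\}$, defining an assignment $\alpha_i \in \{0,1\}$. Finally, property (iii) of \cref{cor:sidon-modified} forces every representation $t_k = a+b$ with $a, b \in U$ to take the form $a = v_{\alpha_i}+s_i$, $b = t_k - s_i - v_{\alpha_i}$ for some $i$ with $b \in C_k$; the latter requires the corresponding literal to occur in clause $k$, so $\alpha$ satisfies $\phi$.

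\paragraph{Main obstacle.}
The completeness argument is the main technical point: it must simultaneously use the Sidon property to localize every representation (both of variable-slot elements and of the enforcers $t_k$) and use the structural rigidity of the variable gadget to reduce each local root to a binary choice. The Sidon gap-parameter $4$ in \cref{cor:sidon-modified} is calibrated precisely so that $V, v_0, v_1$ (of diameter $O(1)$) cannot create spurious coincidences between different shifts. Everything else—verifying the universe bound $160(n+m)^2$, checking that $X \subseteq Y$, and running in time $O(n^2 + m^2)$—is routine bookkeeping once the shifts and gadgets are in place.
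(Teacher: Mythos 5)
The central flaw is your choice $Y = U + U$. With this choice the completeness direction is vacuous: the set $A = U$ always satisfies $A + A = U + U = Y$, and your $X$ is always a subset of $U + U$ (for instance, $V + 2s_i = V_0 + V_0 + 2s_i \subseteq \bigl((V_0\cup V_1)+s_i\bigr) + \bigl((V_0\cup V_1)+s_i\bigr) \subseteq U+U$, and $t_k = (s_i + v_1) + (t_k - s_i - v_1) \in U + U$ via any positive literal, and symmetrically for negative). Hence $A = U$ is a witness for \emph{every} formula $\phi$, satisfiable or not, so the reduction never outputs a NO instance. Relatedly, your claim that the localized set $A_i := (A-s_i) \cap (V_0 \cup V_1)$ satisfies $A_i + A_i = V$ does not follow: localization only gives $V \subseteq A_i + A_i$, and since $Y = U+U$ contains all of $(V_0 \cup V_1)+(V_0\cup V_1) + 2s_i$ (which strictly contains $V + 2s_i$ for the gadget you cite), nothing prevents $A_i = V_0 \cup V_1$. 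Then $A_i$ is not a sumset root of $V$ and your extraction $\alpha_i \in \{0,1\}$ is ill-defined.

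The fix is to make $Y$ strictly smaller than $U + U$ near the diagonal positions $2s_i$, and it is instructive to compare with what the paper actually does. The paper discards the constant-size gadget $V, V_0, V_1$ from the overview entirely at this stage; once you are in the sandwiched formulation $X \subseteq A + A \subseteq Y$, the gadget becomes unnecessary overhead. Instead, variable $i$ is encoded simply by whether $s_i$ or $s_i + 1$ lies in $A$. The set $Y$ is built by hand so that $2s_i + 1 \notin Y$ (it contains $2s_i + \{0,2\}$ but not $2s_i + 1$, and the Sidon gap argument shows $2s_i + 1$ does not arise elsewhere in $U+U$). This exclusion is exactly the clause that rules out $A = U$, and more importantly it forces ``not both $s_i$ and $s_i+1$ in $A$,'' which is what makes the extracted assignment $\alpha_i := \mathbf{1}[s_i + 1 \in A]$ consistent with the clause witnesses. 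Moreover $X$ is taken to be just $\{t_1, \dots, t_m\}$; there is no need to force $V + 2s_i \subseteq A + A$ because variables that $A$ ignores are harmless (the extracted $\alpha_i$ defaults to $0$, and the completeness argument only requires a witness per clause). If you want to keep your gadget-based construction, you would have to enumerate $Y$ explicitly as a union of the ``legal'' cross terms (as the paper does) rather than taking $U + U$, and additionally verify that the honest $A_\alpha$'s still satisfy $A_\alpha + A_\alpha \subseteq Y$; but this adds work without buying anything over the paper's $\{0,1\}$-encoding.
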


\begin{proof}
    We denote the variables of $\phi$ by $x_1,\ldots,x_n$. 
    Let $s_1, \dots, s_n, t_1, \dots, t_m$ denote the integers as constructed by \cref{cor:sidon-modified}.
    We then define the three sets as follows. Let $X := \{t_1,\ldots,t_m\}$, and let
    \begin{alignat*}{2}
        C_k &:= \bigcup_{\substack{x_i \text{ appears} \\ \text{positively in} \\ \text{the $k$-th clause.}}} \set{t_k - s_i - 1} \cup \bigcup_{\substack{x_i \text{ appears} \\ \text{negatively in} \\ \text{the $k$-th clause.}}} \set{t_k - s_i} &&\qquad(\text{for $k \in [m]$}), \\
        U &:= \bigcup_{i \in [n]} (s_i + \set{0, 1}) \cup \bigcup_{\substack{k \in [m]}} C_k, \\
        Y &:= \bigcup_{i \in [n]} (2s_i + \set{0, 2}) \cup \bigcup_{\substack{i, j \in [n]\\i \neq j}} (s_i + s_j + \set{0, 1, 2}) \\
        &\qquad\qquad\cup \bigcup_{\substack{i \in [n]\\k \in [m]}} (s_i + \set{0, 1} + C_k) \cup \bigcup_{\substack{k, \ell \in [m]}} (C_k + C_{\ell}).
    \end{alignat*}
    \cref{cor:sidon-modified}~(i) guarantees that $0 \leq s_1, \dots, s_n \leq N$ and $4N\leq t_1, \dots, t_m \leq 5N$ where $N = 16(n + m)^2$. Hence, all sets contain only nonnegative integers of size at most $10N \leq 160(n + m)^2$.

    \paragraph{Soundness.}
        Suppose that $\phi$ is satisfiable and let $\alpha \in \set{0, 1}^n$ denote a satisfying assignment of~$\phi$. We show that there is a set $A \subseteq U$ satisfying that $X \subseteq A + A \subseteq Y$. We choose
        \[
            A = \bigcup_{i \in [n]} \set{s_i + \alpha_i} \cup \bigcup_{k \in [m]} C_k. 
        \]
        Observe that $A \subseteq U$. In addition, it is not too difficult to see that $A + A \subseteq Y$, since: 
        \begin{align*}
            A + A &= \bigcup_{i, j \in [n]} \set{s_i + s_j + \alpha_i + \alpha_j} \cup \bigcup_{\substack{i \in [n]\\k \in [m]}} (s_i + \alpha_i + C_k) \cup \bigcup_{k, \ell \in [m]} (C_k + C_\ell).
        \end{align*}
        The second and third terms are clearly included in the third and fourth terms of $Y$. For the first term we distinguish two cases: If $i = j$, then $\set{s_i + s_j + \alpha_i + \alpha_j} \subseteq 2s_i + \set{0, 2}$ which is included in the first term of $Y$. Otherwise, if $i \neq j$, then $\set{s_i + s_j + \alpha_i + \alpha_j} \subseteq s_i + s_j + \set{0, 1, 2}$ is clearly included in the second term of $Y$.

        Let us finally prove that $X \subseteq A + A$. As $X = \set{t_1, \dots, t_m}$ we need to show, for each $k \in [m]$, that $t_k \in A + A$.
        Since $\alpha$ satisfies $\phi$, it satisfies some literal of some variable $x_i$ in the $k$-th clause.  
        Consider the term $s_i + \alpha_i + C_k \subseteq A+A$. Assume, on the one hand, that $x_i$ appears positively in this clause, i.e., $\alpha_i = 1$. 
        Then by the definition of $C_k$ we have $t_k = s_i + 1 + (t_k - s_i - 1) \in A+A$. On the other hand, if $x_i$ appears negatively in $C_k$ (and thus $\alpha_0 = 0$), then $t_k = s_i + 0 + (t_k - s_i) \in A + A$. 

    \paragraph{Completeness}
        Before we start with the actual completeness proof, let us first assert that our construction admits the following properties:
        \begin{itemize}
            \item $2s_i + 1 \not\in Y$. Of course $2s_i + 1$ does not appear in $2s_i + \set{0, 2}$, but it requires an argument that $2s_1 + 1$ does not spuriously appear in the other terms in $Y$: Since $Y \subseteq U + U$, any other expression would lead to a nontrivial solution of $2s_i = s_j + s_k \pm 3$, or $2s_i = s_j + t_k - s_\ell \pm 3$ or $2s_i = t_k - s_j + t_\ell - s_h \pm 3$. The first equation is ruled out by~\cref{cor:sidon-modified}~(ii), and the latter two are ruled out by~\cref{cor:sidon-modified}~(i).
            \item The only possible representations of $t_k$ in $U + U$ are $t_k = (s_i + 0) + (t_k - s_i)$ and $t_k = (s_i + 1) + (t_k - s_i - 1)$ (whenever the variable~$x_i$ appears in the $k$-th clause). Indeed, any other representation would take one of the forms $t_k = s_i + s_j \pm 3$, $t_k = s_i + (t_\ell - s_j) \pm 3$ (for~\makebox{$i \neq j$} or $k \neq \ell$), or~\smash{$t_k = (t_\ell - s_i) + (t_h - s_j) \pm 3$}. The first and last equations are ruled out by~\cref{cor:sidon-modified}~(i), and the second one is ruled out by~\cref{cor:sidon-modified}~(iii).
        \end{itemize}

        Equipped with these insights, we continue with the completeness proof. So, assume that there is a set $A \subseteq U$ satisfying that $X \subseteq A + A \subseteq Y$; we construct a satisfying assignment of $\phi$. The first claim implies that it cannot simultaneously happen that $s_i + 0 \in A$ and that $s_i + 1 \in A$ (as otherwise~\makebox{$A + A \not\subseteq Y$}). In light of this, we let $\alpha_i = 1$ if and only if $s_i + 1 \in A$. To prove that $\alpha$ is satisfying, consider any clause $k$. Since $X \subseteq A + A$, we in particular have that~\makebox{$t_k \in A + A \subseteq U + U$}. By the second claim, $t_k$ must be represented either in the form of $t_k = (s_i + 0) + (t_k - s_i)$ or~\makebox{$t_k = (s_i + 1) + (t_j - s_i - 1)$}. In the former case we have that $s_i + 0 \in A$ (and thus $\alpha_i = 0$) and $t_k - s_i \in U$ (and thus $x_i$ appears negatively in the $k$-th clause). In the latter case we have that $s_i + 1 \in A$ (and thus $\alpha_i = 1$) and~\makebox{$t_k - s_i - 1 \in U$} (and thus $x_i$ appears positively in the $k$-th clause). In both cases, the clause is satisfied.
    \end{proof}

    \begin{proof}[Proof of Theorem \ref{thm:np-hard-integer}]
        Let us now show a polynomial-time reduction from 3-SAT to the Sumset Recognition problem. Since the problem is clearly in NP, this will prove that the problem is NP-complete. 
        Given a 3-SAT instance; a formula $\phi$ on $n$ variables and $m$ clauses, apply the algorithm from \cref{lem:3sat-int} to obtain three sets $X,Y,U \subseteq [0,160(n+m)^2] \subseteq [0,2^8(n+m)^2]$, such that: 
        \begin{equation*}
            \text{$\phi$ is satisfiable} \qquad\text{if and only if}\qquad \exists A \subseteq U : X \subseteq A + A \subseteq Y.
        \end{equation*}
        Then, apply the algorithm from \cref{lem:masking-int} on $X,Y,U$ to obtain two sets $U',X' \subseteq [0,2^{17}\cdot (2^8(n+m)^2)^2] = [0,2^{33}(n+m)^4]$, such that: 
        \begin{equation*}
            \exists A \subseteq U : X \subseteq A + A \subseteq Y \qquad\text{if and only if}\qquad \exists A' \subseteq U' : X' = A' + A'.
        \end{equation*}
        As a corollary, $\phi$ is satisfiable if and only if $\exists A' \subseteq U'$ such that $X' = A'+A'$.
        Finally, apply the algorithm from \cref{lem:pos-int} on $X',U'$ to obtain a Sumset Recognition instance $X_{\phi} \subseteq [0,2^{24} \cdot 2^{33} (n+m)^4] = [0,2^{57}(n+m)^4]$, such that: 
        \begin{equation*}
            \exists A' \subseteq U' : X' = A' + A' \qquad\text{if and only if}\qquad \exists A'' \subseteq \Int : X_{\phi} = A'' + A''.
        \end{equation*}
        Thus, $\phi$ is satisfiable if and only if $\exists A'' \subseteq \mathbb{Z}$ such that $A''+A'' = X_{\phi}$.

        The running times of the above algorithms are all $\poly(n,m)$. Hence, we showed a polynomial-time reduction from 3-SAT to Sumset Recognition. 
    \end{proof}

\subsection{Reduction for \texorpdfstring{$\Field_p^d$}{Fpd}} \label{sec:sat:sec:Fp}
In this section, we rework the reduction for finite fields. We start with a similar construction of Sidon sets over $\Field_p^d$.

\begin{lemma}[Sidon Sets over~\smash{$\Field_p^d$}] \label{lem:sidon-Fp}
Let $p$ be a prime, and let $d = 2\ceil{\log_p(n)}$. In time $\poly(n)$ we can construct a Sidon set~\smash{$S \subseteq \Field_p^d$} of size $n$.
\end{lemma}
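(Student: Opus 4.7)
The approach directly mirrors the proof of \cref{lem:sidon-int}. I would set $k = \ceil{\log_p n}$, so $d = 2k$ and $q := p^k$ satisfies $n \leq q \leq p n$. An irreducible polynomial of degree $k$ over $\Field_p$ can be found by exhaustive search in time $\poly(n, p)$; using it I identify $\Field_q \cong \Field_p^k$ and hence $\Field_p^d \cong \Field_q \times \Field_q$. It then suffices to construct a Sidon subset of $\Field_q \times \Field_q$ of size $q$ and retain any $n$ of its elements.

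For odd $p$, I would take $S := \set{(x, x^2) : x \in \Field_q}$ and repeat the Erdős-Tur\'an argument verbatim inside $\Field_q$. A collision $(i, i^2) + (j, j^2) = (k, k^2) + (\ell, \ell^2)$ yields $i + j = k + \ell$ and $i^2 + j^2 = k^2 + \ell^2$; squaring the first identity and subtracting the second gives $2ij = 2 k \ell$. Since $2 \neq 0$ in $\Field_q$ (as $p$ is odd), I can cancel and obtain $ij = k\ell$, which together with $i + j = k + \ell$ forces $\set{i, j} = \set{k, \ell}$ as the two roots of the same monic quadratic.

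For $p = 2$ this argument collapses, since the Frobenius identity $(x + y)^2 = x^2 + y^2$ makes the second equation follow trivially from the first. I would instead take $S := \set{(x, x^3) : x \in \Field_q}$ and use the characteristic-free factorization $a^3 + b^3 = (a + b)(a^2 + ab + b^2)$. Whenever $i + j = k + \ell \neq 0$, dividing yields $i^2 + ij + j^2 = k^2 + k\ell + \ell^2$; combined with $(i + j)^2 = i^2 + j^2$ in characteristic $2$, this reduces to $ij = k\ell$, and $\set{i, j} = \set{k, \ell}$ follows exactly as before. The remaining case $i + j = 0$, i.e.\ $i = j$, symmetrically forces $k = \ell$, producing the unavoidable diagonal collapses $(x, x^3) + (x, x^3) = (0, 0)$; these have to be absorbed into the notion of ``trivial solution'' over $\Field_2^d$ (equivalently, the Sidon condition in characteristic $2$ is that all sums of \emph{distinct} pairs be distinct).

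The main obstacle is thus the characteristic-$2$ case: squaring must be replaced by cubing, and the Sidon condition must be phrased so as to permit the unavoidable diagonal coincidences forced by $2x = 0$. All remaining work---searching for the irreducible polynomial, performing arithmetic in $\Field_q$, and enumerating the $q$ points of $S$---fits comfortably within $\poly(n)$ time.
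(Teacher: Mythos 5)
Your proposal is correct and follows essentially the same route as the paper: both reduce to a one-dimensional $\Field_q$ and take the parabola $\set{(x,x^2)}$ for odd $p$ and the cubic $\set{(x,x^3)}$ for $p=2$, following Erd\H{o}s--Tur\'an and Bose--Ray-Chaudhuri. You go somewhat further than the paper, which simply cites Bose--Ray-Chaudhuri for $p=2$: your self-contained argument via $a^3+b^3=(a+b)(a^2+ab+b^2)$ is valid, and you correctly flag the characteristic-$2$ degeneracy that every diagonal sum collapses to $0$, so that $|S+S|=\binom{|S|+1}{2}$ cannot literally hold and the Sidon property must be read as uniqueness of sums of \emph{distinct} pairs (which is all the downstream reduction actually uses). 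One small point in your favor: the paper's line ``Let $q=p^d$'' is an off-by-a-factor typo (it would put the set in $\Field_p^{2d}$); your choice $q=p^{d/2}=p^{\ceil{\log_p n}}$ is the intended one.
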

\begin{proof}
We follow a classic construction dating back to Bose and Ray-Chaudhuri~\cite{BoseR60}; see also~\cite{Cilleruelo12}. Let $q = p^d$. By representing the finite field $\Field_q$ as $\Field_p^d$, it suffices to construct a Sidon set in~$\Field_q^2$. For odd $p$, such a set is for instance $S = \set{(x, x^2) : x \in \Field_q}$; the proof is identical to the argument in \cref{lem:sidon-int}. We take $S = \set{(x, x^3) : x \in \Field_q}$ if instead $p = 2$~\cite{BoseR60}. Polynomial arithmetic can be implemented in time $\poly(q)$, and so the construction time of $S$ is also bounded by $\poly(n)$.
\end{proof}

\begin{lemma}[Reduction from 3-SAT for~\smash{$\Field_p^d$}] \label{lem:sat-Fp}
    There is an algorithm that, given a 3-SAT formula~$\phi$ on $n$ variables and $m$ clauses, outputs sets~\smash{$X, Y, U, V \subseteq \Field_p^d$}, where~\smash{$d = 2 \ceil{\log_p(n + m)} + 5$}, such that
    \begin{equation*}
        \text{$\phi$ is satisfiable} \qquad\text{if and only if}\qquad \exists A \subseteq U, B \subseteq V : X \subseteq A + B \subseteq Y.
    \end{equation*}
    The algorithm runs in time $\poly(n, m)$. 
\end{lemma}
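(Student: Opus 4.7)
The plan is to adapt the integer reduction \cref{lem:3sat-int} to the $A+B$ framework over $\Field_p^d$. Using \cref{lem:sidon-Fp}, I obtain a Sidon set of size $n + m$ in $\Field_p^{d_0}$ where $d_0 = 2\ceil{\log_p(n + m)}$, and relabel its elements as $s_1, \ldots, s_n$ (variable shifts) and $t_1, \ldots, t_m$ (clause shifts). Working in $\Field_p^d = \Field_p^{d_0} \times \Field_p^5$ reserves five auxiliary coordinates for gadget bookkeeping. As a finite-field analogue of \cref{cor:sidon-modified}, I translate the $t_k$'s by a dedicated auxiliary basis vector so that the three classes of Sidon sums $s_i + s_j$, $s_i + t_k$, and $t_k + t_\ell$ live in pairwise-disjoint coordinate substrata of $\Field_p^d$, ruling out spurious identifications between different summand types.

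For the variable gadget I choose two distinguished auxiliary vectors $w_0, w_1 \in \Field_p^5$, representing the two truth values of variable $i$ by the markers $s_i + w_0$ and $s_i + w_1$; for each clause $k$ I define
\begin{equation*}
C_k := \bigcup_{x_i \text{ positive in clause } k} \set{t_k - s_i - w_1} \;\cup\; \bigcup_{\bar x_i \text{ in clause } k} \set{t_k - s_i - w_0}.
\end{equation*}
I then set $X := \set{t_k : k \in [m]}$ and $U := V := \bigcup_i (s_i + \set{w_0, w_1}) \cup \bigcup_k C_k$, and take $Y$ to be the full sumset $U + V$ with the ``inconsistency markers'' $2s_i + w_0 + w_1$ excised (these are pairwise distinct by the Sidon property whenever $p$ is odd). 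For soundness, a satisfying $\alpha$ yields the witness $A = B = \bigcup_i \set{s_i + w_{\alpha_i}} \cup \bigcup_k C_k$: the inconsistency markers never arise since $A$ carries only one marker per variable, and each $t_k$ is realized via a satisfying literal. For completeness, the exclusion of $2s_i + w_0 + w_1$ from $Y$ forbids the ``mixed'' pairing $(s_i + w_0) \in A$ with $(s_i + w_1) \in B$ (and its swap), while the Sidon property combined with coordinate separation forces every representation $t_k = a + b$ to take the form $(s_i + w_\beta) + (t_k - s_i - w_\beta)$ with $x_i$ appearing in clause $k$ with sign $\beta$; reading off these $\beta$'s assembles a satisfying assignment.

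The principal technical obstacle is the case $p = 2$: here $2s_i + w_0 + w_1 = w_0 + w_1$ loses its dependence on $i$, so a single excision from $Y$ cannot encode per-variable consistency. The fix is to make the inconsistency markers variable-dependent, for instance by replacing the ``$1$-marker'' by $s_i + w_1 + r_i$ for fresh perturbations $\set{r_i}$ drawn from the already-available Sidon structure, so that the inconsistency sums remain pairwise distinct even in characteristic two. A secondary subtlety specific to the $A+B$ framework is that $A$ could in principle contain both markers for variable $i$ while $B$ contains neither, sidestepping the $Y$-based exclusion; I anticipate neutralizing this by including in $X$ additional calibration elements that force $B$ to mirror $A$'s marker choice per variable, thereby synchronizing the two sides into one consistent assignment. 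With these adjustments, the rest of the proof---soundness, completeness, and the universe bound $d = d_0 + 5$---parallels the integer case in a routine manner.
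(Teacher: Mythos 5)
Your proposal correctly identifies the two essential obstacles --- degenerate $2s_i$ over $\Field_2$, and the $A/B$ asymmetry (in the $A+B$ framework, nothing a priori forces $B$ to mirror $A$'s marker choice) --- but your fix for them has a genuine gap, and the gap is precisely where the paper's construction diverges from yours. You set $U = V$ and use the \emph{same} Sidon element $s_i$ on both sides, so the natural per-variable calibration element to put in $X$ would be $s_i + s_i = 2s_i$; over $\Field_2$ this is $0$ for every $i$, and more generally with $U = V$ the same-index sum $(s_i + w_\beta) + (s_i + w_\beta)$ is $0$ over $\Field_2$ no matter what marker $w_\beta$ or perturbation $r_i$ you attach. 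Your $r_i$-perturbation does make the inconsistency markers $w_0 + w_1 + r_i$ pairwise distinct, but it gives you no way to manufacture the ``additional calibration elements'' you announce at the end --- there simply is no element of $X$ in your setup that can certify, per variable, that both $A$ and $B$ carry a coherent marker for $i$.

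The paper's fix is a single structural move that resolves both issues at once: it draws a Sidon set of size $2n+m$ and uses \emph{two disjoint families} $a_1,\dots,a_n$ and $b_1,\dots,b_n$, putting the $a_i$'s in $U$ and the $b_i$'s in $V$ (so $U \neq V$). The calibration element $a_i + b_i$ is then included in $X$; it is per-variable-distinct over every field (Sidon property), and it forces $A$ to contain one of $a_i + v_0, a_i + v_1$ and $B$ to contain the matching $b_i - v_\beta$. The inconsistency markers $a_i + b_i + v_0 - v_1$ and $a_i + b_i + v_1 - v_0$ (which coincide over $\Field_2$ but remain $i$-dependent because of the $a_i + b_i$ term) are excluded from $Y$, ruling out $A$ holding both markers. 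This makes the characteristic-$2$ case entirely painless and renders the ``secondary subtlety'' moot. Without the $a_i/b_i$ split, I do not see how to complete your completeness argument, particularly over $\Field_2$.
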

\begin{proof}
Let $x_1, \dots, x_n$ denote the variables in $\phi$. Further, let $a_1, \dots, a_n, b_1, \dots, b_n, t_1, \dots, t_m \in \Field_p^d$, where $d = 2\ceil{\log_p(2n + m)} + 3 \leq 2 \ceil{\log_p(n + m)} + 5$, denote the elements of a Sidon set of size~\makebox{$2n + m$} as constructed by \cref{lem:sidon-Fp}. We use the three remaining dimensions to obtain pairwise orthogonal nonzero vectors $v_0, v_1, w$ that are also orthogonal to $a_1, \dots, a_n, b_1, \dots, b_n, t_1, \dots, t_m$. We then define:
\begin{alignat*}{2}
    C_k &:= \bigcup_{\substack{x_i \text{ appears} \\ \text{positively in} \\ \text{the $k$-th clause.}}} \set{t_k - a_i - v_1} \cup \bigcup_{\substack{x_i \text{ appears} \\ \text{negatively in} \\ \text{the $k$-th clause.}}} \set{t_k - a_i - v_0}, \\
    U &:= \bigcup_{i \in [n]} (a_i + \set{v_0, v_1}), \\
    V &:= \bigcup_{i \in [n]} (b_i - \set{v_0, v_1}) \cup \bigcup_{\substack{k \in [m]}} (w + C_k), \\
    X &:= \bigcup_{i \in [n]} \set{a_i + b_i} \cup \bigcup_{k \in [m]} \set{w + t_k}, \\
    Y &:= \bigcup_{i \in [n]} \set{a_i + b_i} \cup \bigcup_{\substack{i, j \in [n]\\i \neq j}} (a_i + b_j + \set{v_0, v_1} - \set{v_0, v_1}) \cup \bigcup_{\substack{i \in [n]\\k \in [m]}} (a_i + \set{v_0, v_1} + w + C_k).
\end{alignat*}

\paragraph{Soundness.}
    Suppose that $\phi$ is satisfiable and let $\alpha \in \set{0, 1}^n$ denote a satisfying assignment of~$\phi$. We show that there are sets $A \subseteq U$ and $B \subseteq V$ satisfying that $X \subseteq A + B \subseteq Y$. To this end, we pick
    \begin{align*}
        A &= \bigcup_{i \in [n]} \set{a_i + v_{\alpha_i}}, \\
        B &= \bigcup_{i \in [n]} \set{b_i - v_{\alpha_i}} \cup \bigcup_{k \in [m]} (w + C_k).
    \end{align*}
    Observe that $A \subseteq U$ and $B \subseteq V$, and that
    \begin{equation*}
        A + B = \bigcup_{i, j \in [n]} \set{a_i + b_j + v_{\alpha_i} - v_{\alpha_j}} \cup \bigcup_{\substack{i \in [n]\\k \in [m]}} (a_i + v_{\alpha_i} + w + C_k) \subseteq Y.
    \end{equation*}
    We finally show that $X \subseteq A + A$. On the one hand, it is clear that $a_i + b_i \in A + B$ for all $i \in [n]$. On the other hand, focus on an arbitrary clause $k \in [m]$. Since $\alpha$ is a satisfying assignment, there must be a variable $x_i$ which satisfies the $k$-th clause. If the variable appears positively (and thus~\makebox{$\alpha_i = 1$}), then $w + t_k = (a_i + v_1) + (w + t_k - a_i - v_1) \in A + B$. Otherwise, if the variable appears negatively (and thus $\alpha_i = 0$), we find $w + t_k = (a_i + v_0) + (w + t_k - a_i - v_0) \in A + B$.

\paragraph{Completeness}
    For the completeness proof, we again first gather some structural observations. Specifically:
    \begin{itemize}
        \item $a_i + b_i + v_0 - v_1 \not\in Y$ and $a_i + b_i + v_1 - v_0 \not\in Y$. We check that neither of these elements appears spuriously in the second and third terms in $Y$. For the third term, this is clear as it contains the shift $+ w$, which is orthogonal to $a_i + b_i \pm v_0 \pm v_1$. And if the second term would contain, say, $a_i + b_i + v_0 - v_1$, then we would also obtain a solution to the equation~\makebox{$a_i + b_i = a_j + b_h$} (where $j \neq h$). This is ruled out since $a_1, \dots, a_n, b_1, \dots, b_n$ stem from a common Sidon set. The same argument shows that $a_i + b_i$ can be represented in exactly two ways in~\makebox{$U + V$}, namely as $a_i + b_i = (a_i + v_0) + (b_i - v_0)$ and $a_i + b_i = (a_i + v_1) + (b_1 - v_1)$.
        \item The only possible representations of $w + t_k$ in $U + V$ are $t_k = (a_i + v_0) + (w + t_k - a_i - v_0)$ and $t_k = (a_i + v_1) + (w + t_k - a_i - v_1)$ (whenever the variable~$x_i$ appears in the $k$-th clause). Indeed, any other representation would necessarily take the form $w + t_k = a_i + b_j$ or $w + t_k = a_i + (w + t_\ell - a_j)$ (where $i \neq j$ or $k \neq \ell$). The first equation is ruled out since $w$ is orthogonal to all other terms in the equation, and the second out is ruled out since $a_1, \dots, a_n, t_1, \dots, t_m$ stem from a common Sidon set.
    \end{itemize}

    We are ready to complete the proof. So assume that there are sets $A \subseteq U$ and $B \subseteq V$ satisfying that $X \subseteq A + B \subseteq Y$; we construct a satisfying assignment of $\phi$. The first claim implies that either~\makebox{$a_i + v_0 \in A$} or that $a_i + v_1 \in A$. Indeed, if both elements were missing we would violate the condition that $a_i + b_i \in X \subseteq A + B$ (which requires at least one of these two elements to be present). Suppose instead that both elements are present in $A$. Then the condition $a_i + b_i \in X \subseteq A + B$ entails that $b_i - v_0 \in B$ or $b_i - v_1 \in B$. Either case leads to a contradiction as then also $a_i + b_i + v_1 - v_0 \in A + B$ or $a_i + b_i + v_0 - v_1 \in A + B$, respectively, both of which contradict $A + B \subseteq Y$ by the first claim.
    
    Therefore, let $\alpha_i = 1$ if and only if $a_i + v_1 \in A$. To prove that $\alpha$ is satisfying, consider any clause $k$. As $X \subseteq A + B$, in particular we have that~\makebox{$w + t_k \in A + B \subseteq U + V$}. By the second claim, $t_k$ must be represented either in the form of $t_k = (a_i + v_0) + (w + t_k - a_i - v_0)$ or~\makebox{$t_k = (a_i + v_1) + (w + t_k - a_i - v_1)$}. In the former case we have that $a_i + v_0 \in A$ (and thus~\makebox{$\alpha_i = 0$}) and $w + t_k - a_i - v_0 \in V$ (and thus $x_i$ appears negatively in the $k$-th clause). In the latter case we have that $a_i + v_1 \in A$ (and thus $\alpha_i = 1$) and~\makebox{$w + t_k - a_i - v_1 \in V$} (and thus $x_i$ appears positively in the $k$-th clause). In both cases, the clause is satisfied.
\end{proof}

\begin{proof}[Proof of \cref{thm:np-hard-finite-field}]
This proof is analogous to the proof of \cref{thm:np-hard-integer}, and for this reason, we will be brief. We concatenate the three reductions in \cref{lem:sat-Fp,lem:masking-Fp,lem:pos-Fp}. Specifically, by \cref{lem:sat-Fp} we transform a given 3-SAT instance $\phi$ on $n$ variables and $\poly(n)$ clauses into sets~\smash{$X, Y, U, V \subseteq \Field_p^d$}, where~\smash{$d = \Order(\log_p n)$}, which is then transformed by \cref{lem:masking-Fp} into sets~\smash{$X', U', V' \subseteq \Field_p^{d'}$}, where~\smash{$d' = \Order(\log_p n)$}, which are finally transformed by \cref{lem:pos-Fp} into a set~\smash{$X'' \subseteq \Field_p^{d''}$}, where~\smash{$d'' = \Order(\log_p n)$}. \cref{lem:sat-Fp,lem:masking-Fp,lem:pos-Fp} guarantee that $\phi$ is satisfiable if and only if $\exists A \subseteq U, B \subseteq V$ with $X \subseteq A + B \subseteq Y$ if and only if $\exists A' \subseteq U', B' \subseteq V'$ with~\smash{$X' = A' + B'$} if and only if~\smash{$\exists A'' \subseteq \Field_p^{d''}$} with $X'' = A'' + A''$. We have thereby transformed the initial 3-SUM instance into an equivalent instance of the Sumset Recognition problem over~\smash{$\Field_p^{d''}$}, where~\smash{$|\Field_p^{d''}| = p^{d''} = \poly(n)$}. The total running time is $\poly(n)$.
\end{proof}

\subsection{ETH-Hardness}
Finally, we analyze our hardness reduction through a more fine-grained lens and derive exact conditional lower bounds for the Sumset Recognition problem based on the \emph{Exponential Time Hypothesis~(ETH)}. This hypothesis, as introduced by Impagliazzo and Paturi~\cite{ImpagliazzoP01}, postulates that the 3-SAT problem cannot be solved in time $2^{\epsilon n} \poly(n)$ for all $\epsilon > 0$. We rely on the well-established Sparsification Lemma to better control the number of clauses in a 3-SAT instance:

\begin{lemma}[Sparsification Lemma \cite{ImpagliazzoPZ01}]
    \label{lem:sparsification}
    For any $\eps > 0$, there exists $c_\eps > 0$ and an algorithm that, given a 3-SAT formula $\phi$ on $n$ variables, computes in $2^{\eps n} \poly(n)$ time 3-SAT instances $\phi_1, \ldots , \phi_k$ with $k \leq 2^{\eps n}$ such that $\phi$ is satisfiable if and only if $\bigvee_{i=1}^k \phi_i$ is satisfiable. Moreover, the number of clauses in each formula is at most $c_{\eps} n$. 
\end{lemma}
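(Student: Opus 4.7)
The plan is to prove the Sparsification Lemma by designing a recursive branching procedure that turns a 3-SAT instance $\phi$ into a collection of sparser 3-SAT instances $\phi_1, \dots, \phi_k$ whose disjunction is equisatisfiable with $\phi$. The key combinatorial idea is to search for a ``frequent'' pattern of literals in $\phi$ and to case-split according to whether this pattern's literals are simultaneously falsified or not. Intuitively, as long as $\phi$ has too many clauses, some small set of literals must recur often enough to serve as a pivot.

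Concretely, I would fix a decreasing sequence of thresholds $k_1 > k_2 > k_3$ depending on $\eps$ and call a formula \emph{sparse} if for every $s \in \set{1, 2, 3}$, every $s$-subset of literals appears as a subset of at most $k_s$ clauses. A double-counting argument shows that any sparse 3-CNF formula on $n$ variables has at most $c_\eps n$ clauses for an appropriate $c_\eps$. If $\phi$ is not sparse, there exists an $s$-subset $H$ of literals contained in more than $k_s$ clauses, and I would branch on $H$: in the first branch, add $s$ unit clauses asserting that all literals of $H$ are false (which eliminates $s$ variables by unit propagation); in the second branch, delete from $\phi$ all clauses containing $H$ (they are automatically satisfied in this branch, since at least one literal in $H$ is true). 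Recursing on both branches until every leaf is sparse yields the desired collection $\phi_1, \dots, \phi_k$, and $\phi$ is satisfiable iff some $\phi_i$ is.

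The main obstacle is bounding the total number of leaves by $2^{\eps n}$. For this, I would introduce a potential function on the formula---e.g., assign a weight $\mu_s$ to every clause of current ``residual size'' $s$---and argue that both branching operations decrease the total potential in a controlled fashion: branch~(a) reduces the variable count and may shrink surviving clauses (moving them to smaller residual sizes, which are weighted differently), while branch~(b) removes more than $k_s$ clauses of a fixed size. The delicate part is calibrating the $\mu_s$ and $k_s$ in terms of $\eps$; following the original argument of Impagliazzo--Paturi--Zane, one picks the $\mu_s$ geometrically and takes the $k_s$ large enough to force a net potential decrease at every split. Solving the resulting recursion yields a branching tree of size at most $2^{\eps n}$, and since each branching step runs in polynomial time, the overall running time is $2^{\eps n} \poly(n)$ as required.
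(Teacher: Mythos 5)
The paper does not prove the Sparsification Lemma; it cites~\cite{ImpagliazzoPZ01} directly. Your sketch reconstructs the Impagliazzo--Paturi--Zane branching scheme, which is the correct route, but the description of the second branch contains a gap that breaks equisatisfiability.

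When you ``delete from $\phi$ all clauses containing $H$'' without recording the constraint that some literal of $H$ is true, the resulting formula is strictly weaker than $\phi$ restricted to that case, so a satisfying assignment of the branch-(b) formula need not extend to one of $\phi$. As a concrete failure, take $\phi = (x_1 \vee x_2 \vee x_3) \wedge (x_1 \vee x_2 \vee x_4) \wedge (\overline x_1) \wedge (\overline x_2) \wedge (\overline x_3) \wedge (\overline x_4)$, which is unsatisfiable, and branch on the heart $H = \set{x_1, x_2}$. Branch (a) yields an unsatisfiable formula, but your branch (b) formula $(\overline x_1) \wedge (\overline x_2) \wedge (\overline x_3) \wedge (\overline x_4)$ is satisfiable, so the output would wrongly declare $\phi$ satisfiable. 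The fix, as in the original IPZ argument, is to replace the flower $C_1, \dots, C_t$ in branch (b) by the single clause $H$ (the disjunction of the literals of $H$). This stays within 3-CNF since $|H| \leq 3$, and you may in fact restrict to $s \in \set{1,2}$ because if $|H| = 3$ every clause containing $H$ equals $H$ and the branching is vacuous. The added clause $H$ is also not just a correctness patch: branch (b) now introduces a new, strictly shorter clause, and the potential function must assign shorter clauses a larger weight $\mu_s$ precisely so that removing the many long clauses still forces a net potential drop despite this addition. Your sketch of the potential argument would have nothing to charge for in branch (b) as you wrote it, so the accounting you describe cannot even be set up without first correcting the branch.
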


\begin{proof}[Proof of \cref{thm:eth-hard-integer}]
We show that if there is an algorithm for Sumset Recognition running in time $2^{o(n^{1/4})}$, then there is also an algorithm for 3-SAT running in time $2^{o(n)}$, thus refuting the Exponential Time Hypothesis. 
More formally, assume, for the sake of contradiction, that for any~$\delta > 0 $, there exists an algorithm for Sumset Recognition running in time \smash{$2^{\delta N^{1/4}} \poly(N)$} 
on instances taken from $[0,N]$. Let us show that for any $\eps > 0$, there is an algorithm for 3-SAT whose running time is $2^{\eps n} \poly(n)$. 

Let $\phi$ be a 3-SAT formula on $n$ variables and let $\eps > 0$. 
Apply the algorithm from \cref{lem:sparsification} with $\eps' := \eps/2$ to obtain $k \leq 2^{\eps' n}$ formulas $\phi_{1},\ldots,\phi_k$ such that each formula has 
at most $c_{\eps'} n$ clauses, and $\phi$ is satisfiable if and only if $\bigvee_{i=1}^k \phi_i$ is satisfiable. Its running time is $2^{\eps' n} \poly(n)$. 
Now, apply the reduction from 3-SAT to Sumset Recognition, given in the proof of \cref{thm:np-hard-integer}, to each of the formulas. 
The result of this step is $k$ instances of Sumset Recognition~\makebox{$X_1, \ldots , X_k \subseteq [0,N]$}, where~\makebox{$N \leq 2^{60} (n+c_{\eps'} n)^4 = 2^{60}(1+c_{\eps'})^4 n^4$}. 
Moreover, $\phi$ is satisfiable if and only if one of the Sumset Recognition instances has a sumset root. 
Hence, to decide if $\phi$ is satisfiable, apply the algorithm for Sumset Recognition on each of those instances. The total running time is then 
$2^{\eps' n} \cdot \poly(n) +  k \cdot 2^{\delta (2^{60}(1+c_{\eps'})^4 n^4)^{1/4} } \cdot \poly(N) = 2^{\eps' n} \cdot 2^{\delta 2^{15}(1+c_{\eps'}) n} \cdot \poly(n)$. 
By setting $\delta \leq \eps'/(2^{15} (1+c_{\eps'}))$, the running time becomes $2^{2 \eps' n} \cdot \poly(n) = 2^{\eps n} \cdot \poly(n)$. 
\end{proof}

By exactly the same proof, we can also establish that it is ETH-hard to solve the Sumset Recognition problem over a finite field $\Field$ in time~\smash{$2^{\order(|\Field|^{1/4})}$}.

\bibliographystyle{plainurl}
\bibliography{paper}

\end{document}